\documentclass[a4paper,UKenglish,cleveref, autoref, thm-restate]{lipics-v2021}

\bibliographystyle{unsrt}

\usepackage[utf8]{inputenc}

\usepackage{textpos}
\usepackage{amsmath,amsthm,amssymb}
\usepackage{mathtools}
\usepackage{stmaryrd}
\usepackage{fancyvrb}
\usepackage[dvipsnames]{xcolor}
\usepackage{dashrule}
\usepackage{amsfonts}
\usepackage{amssymb}
\usepackage{bbold}
\usepackage{longtable}
\usepackage{mathpartir}
\usepackage{mathtools}

\usepackage{multicol}
\usepackage{xhfill}
\usepackage{proof}
\usepackage{cmll}
\usepackage[framemethod=tikz]{mdframed}
\usepackage{tikz}
\usepackage[colorinlistoftodos,prependcaption]{todonotes}
\presetkeys%
{todonotes}%
{inline,backgroundcolor=blue!20}{}
\usetikzlibrary{calc,shapes,arrows,positioning,automata}

\nolinenumbers

\newcommand{\tensor}{\ensuremath{\otimes}}




\newsavebox\MBox
\newcommand\Cline[2][red]{{\sbox\MBox{$#2$}%
  \rlap{\usebox\MBox}\color{#1}\rule[-1\dp\MBox-2pt]{\wd\MBox}{0.5pt}}}


\theoremstyle{definition}
\newtheorem{convention}[theorem]{Convention}


\usetikzlibrary{decorations.markings,arrows,decorations.pathmorphing}
\usetikzlibrary{shapes.misc,matrix,backgrounds,folding,positioning}
\usetikzlibrary{shapes.geometric}
\pgfdeclarelayer{edgelayer}
\pgfdeclarelayer{nodelayer}
\pgfsetlayers{background,edgelayer,nodelayer,main}
\tikzset{every path/.style={draw=black!80, line width=0.6pt}}
\tikzstyle{every picture}=[baseline=-0.25em]
\tikzstyle{none}=[inner sep=0mm]
\tikzstyle{zxnode}=[shape=circle, minimum width=.25cm, inner sep=0.5pt, font=\footnotesize, draw=black,thick]
\tikzstyle{gn}=[zxnode ,fill=green, draw=green!10!black]
\tikzstyle{rn}=[zxnode ,fill=red, draw=red!10!black]
\tikzstyle{H box}=[rectangle,fill=yellow, draw=yellow!10!black,thick,xscale=1,yscale=1,font=\footnotesize,inner sep=1.2pt,minimum width=0.15cm,minimum height=0.15cm]
\tikzstyle{arrow}=[decoration={markings,mark=at position 1 with
    {\arrow[scale=1.2,>=stealth]{>}}},postaction={decorate}]
\tikzstyle{glabel}=[rounded corners=0.2em,fill=green!30,inner sep=0.1em,font=\scriptsize, anchor=west, xshift=-0.3em, yshift=0,opacity=1]
\tikzstyle{rlabel}=[rounded corners=0.2em,fill=red!30,inner sep=0.1em,font=\scriptsize, anchor=west, xshift=-0.3em, yshift=0,opacity=1]

\pgfdeclarelayer{edgelayer}
\pgfdeclarelayer{nodelayer}
\pgfsetlayers{background,edgelayer,nodelayer,main}
\tikzstyle{every loop}=[]

\usetikzlibrary{circuits.ee.IEC}

\newcommand{\ground}
{
\begin{tikzpicture}[circuit ee IEC,yscale=0.9,xscale=0.8]
\draw[solid,arrows=-] (0,1ex) to (0,0) node[anchor=center,ground,rotate=-90,xshift=.66ex] {};
\end{tikzpicture}}

\newcommand{\sground}{\hspace*{-1pt}\scalebox{0.5}{\ground}}


\newcommand{\ket}[1]{\ensuremath{\left|  #1 \right\rangle}}
\newcommand{\bra}[1]{\ensuremath{\left\langle  #1 \right|}}
\newcommand{\ketbra}[2]{\ket{#1}\!\!\bra{#2}}
\newcommand{\braket}[2]{\left\langle\begin{array}{@{}c@{~}|@{~}c@{}} #1 & #2 \end{array}\right\rangle}
\newcommand{\tr}{\operatorname{Tr}}

\newcommand{\cat}[1]{\mathbf{#1}}
\newcommand{\interp}[1] {\left\llbracket #1 \right\rrbracket}
\newcommand{\ccpm}[1]{\operatorname{CPM}\left(#1\right)}

\newcommand{\rewrites}{\leadsto}

\newcommand{\cpm}{\ensuremath{\rewrites_{\sground}}}

\newcommand{\setirwer}{\rotatebox[origin=c]{180}{$\rewrites$}}

\newcommand{\downrewrites}{\rotatebox[origin=c]{-90}{$\rewrites$}}
\newcommand{\uprewrites}{\rotatebox[origin=c]{90}{$\rewrites$}}
\newcommand{\nerewrites}{\rotatebox[origin=c]{45}{$\rewrites$}}
\newcommand{\serewrites}{\rotatebox[origin=c]{-45}{$\rewrites$}}

\usetikzlibrary{cd}

\allowdisplaybreaks

\title{Geometry of Interaction for ZX{-}Diagrams}


\author{Kostia Chardonnet}{Université Paris-Saclay, CNRS, ENS Paris-Saclay, LMF, 91190, Gif-sur-Yvette, France. \\ Université de Paris, CNRS, IRIF, F-75006, Paris, France  \and \url{https://www.lri.fr/~chardonnet/}}{kostia@lri.fr}{}{}

\author{Benoît Valiron}{Université Paris-Saclay, CNRS, CentraleSupélec, ENS Paris-Saclay, LMF, 91190, Gif-sur-Yvette, France  \and \url{https://www.monoidal.net/}}{benoit.valiron@universite-paris-saclay.fr}{https://orcid.org/0000-0002-1008-5605}{}

\author{Renaud Vilmart}{Université Paris-Saclay, CNRS, ENS Paris-Saclay, Inria, LMF, 91190, Gif-sur-Yvette, France  \and \url{https://rvilmart.github.io/}}{vilmart@lsv.fr}{https://orcid.org/0000-0002-8828-4671}{}

\authorrunning{K. Chardonnet and B. Valiron and R. Vilmart} 

\Copyright{Kostia Chardonnet and Benoît Valiron and Renaud Vilmart} 

\begin{CCSXML}
<ccs2012>
   <concept>
       <concept_id>10003752.10003753.10003758</concept_id>
       <concept_desc>Theory of computation~Quantum computation theory</concept_desc>
       <concept_significance>500</concept_significance>
       </concept>
   <concept>
       <concept_id>10003752.10003790.10003801</concept_id>
       <concept_desc>Theory of computation~Linear logic</concept_desc>
       <concept_significance>300</concept_significance>
       </concept>
   <concept>
       <concept_id>10003752.10003790.10003798</concept_id>
       <concept_desc>Theory of computation~Equational logic and rewriting</concept_desc>
       <concept_significance>300</concept_significance>
       </concept>
 </ccs2012>
\end{CCSXML}

\ccsdesc[500]{Theory of computation~Quantum computation theory}
\ccsdesc[300]{Theory of computation~Linear logic}
\ccsdesc[300]{Theory of computation~Equational logic and rewriting}

\keywords{Quantum Computation, Linear Logic, ZX-Calculus, Geometry of Interaction} 

\relatedversion{}
\relatedversiondetails{Full Version}{https://hal.archives-ouvertes.fr/hal-03154573/}

\EventEditors{Filippo Bonchi and Simon J. Puglisi}
\EventNoEds{2}
\EventLongTitle{46th International Symposium on Mathematical Foundations of Computer Science (MFCS 2021)}
\EventShortTitle{MFCS 2021}
\EventAcronym{MFCS}
\EventYear{2021}
\EventDate{August 23--27, 2021}
\EventLocation{Tallinn, Estonia}
\EventLogo{}
\SeriesVolume{202}
\ArticleNo{16}

\begin{document}

\maketitle



\begin{abstract}
  ZX-Calculus is a versatile graphical language for quantum
  computation equipped with an equational theory.
  Getting inspiration from Geometry of Interaction, in this paper we
  propose a token-machine-based asynchronous model of both pure
  ZX-Calculus and its extension to mixed processes.
  We also show how to connect this new semantics to the usual standard
  interpretation of ZX-diagrams. This model allows us to have a new
  look at what ZX-diagrams compute, and give a more local, operational
  view of the semantics of ZX-diagrams.
\end{abstract}

\section{Introduction}

Quantum computing is a model of computation where data is stored on
the state of particles governed by the laws of quantum physics. The
theory is well established enough to have allowed the
design of quantum algorithms whose applications are gathering
interests from both public and private
actors~\cite{mckinsey,Qureca,quantumGoldRush}.

One of the fundamental properties of quantum objects is to have a
\emph{dual} interpretations.
In the first one, the quantum object is
understood as a \emph{particle}: a definite, localized
point in space, distinct from the other particles. Light can be for
instance regarded as a set of photons. In the other interpretation,
the object is understood as a \emph{wave}: it is ``spread-out'' in space,
possibly featuring interference. This is for instance the
interpretation of light as an electromagnetic wave.

The standard model of computation uses \emph{quantum bits} (qubits)
for storing information and \emph{quantum circuits}~\cite{nielsen} for
describing quantum operations with quantum gates, the quantum
version of Boolean gates.
Although the pervasive model for quantum computation, quantum
circuits' operational semantics is only given in an intuitive
manner. A quantum circuit is understood as some sequential, low-level
assembly language where quantum gates are opaque black-boxes.
In particular, quantum circuits do not natively feature any formal
operational semantics giving rise to abstract reasoning, equational
theory or well-founded rewrite system.

From a denotational perspective, quantum circuits are literal
description of tensors and applications of linear operators. These can
be described with the historical matrix interpretation~\cite{nielsen},
or with the more recent sum-over-paths semantics~\cite{amy-path-sum,
  chareton-qbricks} ---this can be regarded as a \emph{wave-style
  semantics}. In such a semantics, the state of all of the quantum
bits of the memory is mathematically represented as a vector in a
(finite dimensional) Hilbert space: the set of quantum bits is a
\emph{wave} flowing in the circuit, from the inputs to the output,
while the computation generated by the list of quantum gates is a
linear map from the Hilbert space of inputs to the Hilbert space of
outputs.

In recent years, an alternative model of quantum computation with
better formal properties than quantum circuits has emerged: the
ZX-Calculus~\cite{zxorigin}. Originally motivated by a categorical
interpretation of quantum theory, the ZX-Calculus is a graphical
language that represents linear maps as special kinds of graphs called
\emph{diagrams}. Unlike the quantum circuit framework, the ZX-Calculus
comes with a sound and complete~\cite{HNW,vilmart2019nearminimal}, well-defined equational
theory on a small set of canonical generators making it possible to
reason on quantum computation by means of local graph rewriting.

The canonical semantics of a ZX diagram consists in a linear
operator. This operator can be represented as a matrix or through the
more recent sum-over-path semantics~\cite{vilmart2020structure}. But
in both cases, these semantics give a purely functional,
\emph{wave-style} interpretation to the diagram.  Nonetheless, this
graphical language ---and its equational theory--- has been shown to
be amenable to many extensions and is being used in a wide spectrum of
applications ranging from quantum circuit
optimization~\cite{quantumcircuits1, quantumcircuits2},
verification~\cite{MSc.Hillebrand,duncan2014verifying,duncan2018verifying}
and representation such as MBQC patterns~\cite{duncan2010rewriting} or
error-correction~\cite{deBeaudrap2020zxlattice,dBDHP19}.

The standard models for both quantum circuits and ZX-Calculus is
therefore based on a wave-style interpretation.
An alternative operational interpretation of quantum circuit following
a \emph{particle-style} semantics has recently been investigated in
the literature~\cite{goisync}. In this model, quantum bits are
intuitively seen as \emph{tokens} flowing inside the wires of the
circuit. Formally, a quantum circuit is interpreted as a token-based
automata, based on Geometry of Interaction (GoI)~\cite{goi0, goi1,
  goi2, goi3}. Among its many instantiations, GoI
can be seen as a procedure to interpret a
proof-net~\cite{proofnets}---graphical representation of proofs of
linear logic~\cite{linearlogic}---as a token-based
automaton~\cite{danos1999reversible, asperti1995paths}. The flow of a
token inside a proof-net characterises an invariant of the proof---its
computational content. This framework is used in~\cite{goisync} to
formalize the notion of qubits-as-tokens flowing inside a higher-order
term representing a quantum computation---that is, computing a quantum
circuit. However, in this work, quantum gates are still regarded as
black-boxes, and tokens are purely classical objects \emph{requiring
  synchronicity}: to fire, a two-qubit gate needs its two arguments to
be ready.

As a summary, despite their ad-hoc construction, quantum circuits
can be seen from two perspectives: computation as a flow of particles
(i.e. tokens), and as a wave passing through the gates. On the other
hand, although ZX-Calculus is a well-founded language, it still misses such
a particle-style perspective.

\emph{In this paper, we aim at giving a novel insight on the
  computational content of a ZX term in an asynchronous way,
  emphasizing the graph-like behavior of a ZX-diagram.}

Following the idea of using a token machine to exhibit the
computational content of a proof-net or a quantum circuit, we present
in this paper a token machine for the ZX-Calculus. To exemplify the
versatility of the approach, we show how to extend it to mixed
processes~\cite{coecke2012environment,carette2019completeness}. To
assess the validity of the semantics, we show how it links to the
standard interpretation of ZX-diagrams. While the standard
interpretation of ZX-diagrams proceeds with diagram decomposition as tensors and products of matrices, the tokens flowing inside the diagram really exploits the connectivity of the diagram.

\noindent
\begin{minipage}{.7\textwidth}
~~~This ability illustrates one fundamental difference between our
approach and the one in~\cite{goisync}. The latter follows a
\emph{classical control} approach: if qubits can be in superposition,
each qubit inhabits a token sitting in \emph{one single} position in
the circuit. For instance, on the circuit on the right, the state of
the two tokens
\end{minipage}
\hfill
\raisebox{-2ex}{\includegraphics[scale=.6]{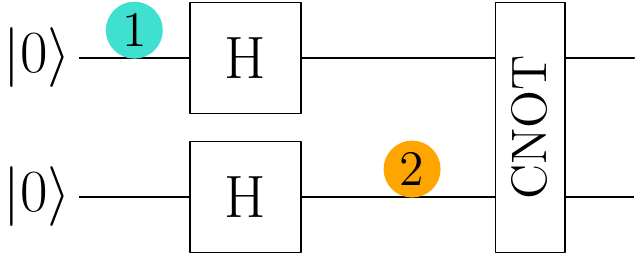}}

\noindent
$|{\color{Turquoise}\bullet}{\color{orange}\bullet}\rangle$ is $\frac{\sqrt2}2(\ket{00}+\ket{10})$. Although the two tokens can be
regarded as being in superposition, their \emph{position} is not.
In our system, tokens and positions can be superposed.
The second fundamental difference lies in the \emph{asynchronicity} of
our token-machine. Unlike~\cite{goisync}, we rely on the canonical
generators of ZX-diagrams: tokens can travel through these nodes in an
asynchronous manner. For instance, in the above circuit the orange
token has to wait for the blue token before crossing the CNOT
gate. As illustrated in Table~\ref{tab:async-rules}, in our system one
token can interact with multi-wire nodes.
Finally, as formalized in Theorem~\ref{thm:arbitrary-wire-init}, a
third difference is that compared to~\cite{goisync}, the token-machine
we present is \emph{non-oriented}: in the circuit above, tokens have
to start on the left and flow towards the right of the circuit whereas
our system is agnostic on where tokens initially ``start''.

\medskip
\noindent
\textbf{Plan of the paper.}
The paper is organized as follows : in Section~\ref{sec:ZX} we present
the ZX-Calculus and its standard interpretation into
$\cat{Qubit}$, and its axiomatization.

In Section~\ref{sec:token-machine} we present the actual asynchronous token machine
and its semantics and show that it is sound and complete with regard
to the standard interpretation of ZX-diagrams. Finally, in Section~\ref{sec:mixed-processes} we present an extension of the
ZX-Calculus to mixed processes and adapt the token machine to take
this extension into account.

Proofs are in the appendix.

\section{The ZX-Calculus}
\label{sec:ZX}
The ZX-Calculus is a powerful graphical language for reasoning about
quantum computation introduced by Bob Coecke and Ross
Duncan~\cite{zxorigin}. A term in this language is a graph ---called a
\emph{string diagram}--- built from a core set of primitives. In the
standard interpretation of ZX-Calculus, a string diagram is
interpreted as a matrix. The language is equipped with an equational
theory preserving the standard interpretation.

\subsection{Pure Operators}

\begin{minipage}{0.4\columnwidth}
  The so-called \emph{pure} ZX-diagrams are generated from a set of
  primitives, given on the right: the Identity, Swap, Cup, Cap, Green-spider and
H-gate:
\end{minipage}
\hfill
$\displaystyle
  \left\lbrace 
\begin{tikzpicture}
	\begin{pgfonlayer}{nodelayer}
		\node [style=none] (0) at (0.125, 0.25) {};
		\node [style=none] (1) at (0.125, -0.25) {};
		\node [style=none, font={\scriptsize}, color=gray] (2) at (0, 0) {$e_0$};
	\end{pgfonlayer}
	\begin{pgfonlayer}{edgelayer}
		\draw (0.center) to (1.center);
	\end{pgfonlayer}
\end{tikzpicture}
\colorbox{pink}{missing file : id-nw}}}
, 
\InputIfFileExists{./figures/swap-nw.tikz}{}{{\color{red}\colorbox{pink}{missing file : swap-nw}}}
, 
\InputIfFileExists{./figures/cup-nw.tikz}{}{{\color{red}\colorbox{pink}{missing file : cup-nw}}}
,
    
\InputIfFileExists{./figures/cap-nw.tikz}{}{{\color{red}\colorbox{pink}{missing file : cap-nw}}}
, 
\InputIfFileExists{./figures/gn-alpha-nw.tikz}{}{{\color{red}\colorbox{pink}{missing file : gn-alpha-nw}}}
,
    
\InputIfFileExists{./figures/H-nw.tikz}{}{{\color{red}\colorbox{pink}{missing file : H-nw}}}
\right\rbrace_{\substack{n,m\in\mathbb
      N\\\alpha\in\mathbb R\\e_i,e'_i\in \mathcal E}}
$\\
We shall be using the following labeling convention: wires (edges) are labeled
with $e_i$, taken from an infinite set of labels $\mathcal E$. We take
for granted that distinct wires have distinct labels. The real number
$\alpha$ attached to the green spiders is called the
\emph{angle}. ZX-diagrams are read top-to-bottom: dangling top edges
are the \emph{input edges} and dangling edges at the bottom are
\emph{output edges}. For instance, Swap has 2 input and 2 output
edges, while Cup has 2 input edges and no output edges.
We write $\mathcal E(D)$ for the set of edge labels in the diagram
$D$, and $\mathcal I(D)$ (resp.~$\mathcal O(D)$) for the list of input
edges (resp.~output edges) of $D$. We denote $::$ the concatenation of lists.

ZX-primitives can be composed either sequentially or in parallel :

\begin{minipage}{0.47\textwidth}
 $$D_2\circ D_1:=
\InputIfFileExists{./figures/compo.tikz}{}{{\color{red}\colorbox{pink}{missing file : compo}}}
$$
\end{minipage}
\begin{minipage}{0.47\textwidth}
$$D_1\otimes D_2:=
\InputIfFileExists{./figures/tensor.tikz}{}{{\color{red}\colorbox{pink}{missing file : tensor}}}
$$
\end{minipage}

We write $\cat{ZX}$ for the set of all ZX-diagrams.
Notice that when composing diagrams with $(\_\circ\_)$, we ``join''
the outputs of the top diagram with the inputs of the bottom
diagram. This requires that the two sets of edges have the same
cardinality. The junction is then made by relabeling the input edges
of the bottom diagram by the output labels of the top diagram.

\noindent\begin{minipage}{0.7\columnwidth}
  \begin{convention}
  We define a second spider, red this time, by composition of Green-spiders and H-gates, as shown on the right.
\end{convention}
\end{minipage}
\hfill 
\InputIfFileExists{./figures/colour-change.tikz}{}{{\color{red}\colorbox{pink}{missing file : colour-change}}}

\begin{convention}
  We write $\sigma$ for a permutation of wires, i.e any diagram
  generated by
  $\left\lbrace~
\begin{tikzpicture}
	\begin{pgfonlayer}{nodelayer}
		\node [style=none] (0) at (0, 0.25) {};
		\node [style=none] (1) at (0, -0.25) {};
	\end{pgfonlayer}
	\begin{pgfonlayer}{edgelayer}
		\draw (0.center) to (1.center);
	\end{pgfonlayer}
\end{tikzpicture}
\colorbox{pink}{missing file : id}}}
~,
\InputIfFileExists{./figures/swap.tikz}{}{{\color{red}\colorbox{pink}{missing file : swap}}}
\right\rbrace$ with
  sequential and parallel composition. We write the Cap as $\eta$ and
  the Cup as $\epsilon$. We write $Z_k^n(\alpha)$ (resp, $X_k^n$) for
  the green-node (resp, red-node) of $n$ inputs, $k$ outputs and
  parameter $\alpha$ and $H$ for the H-gate. In
  the remainder of the paper we omit the edge labels when not necessary . Finally, by
  abuse of notation a green or red node with no explicit parameter
  holds the angle
  $0$: $\quad
\InputIfFileExists{./figures/gn-0-def.tikz}{}{{\color{red}\colorbox{pink}{missing file : gn-0-def}}}
\quad\text{ and }\quad
\InputIfFileExists{./figures/rn-0-def.tikz}{}{{\color{red}\colorbox{pink}{missing file : rn-0-def}}}
$.
\end{convention}

\subsection{Standard Interpretation}
\label{sec:std-interp}
We understand ZX-diagrams as linear operators through the
\emph{standard interpretation}. Informally, wires are interpreted with
the two-dimensional Hilbert space, with orthonormal basis written as
$\{\ket0, \ket1\}$, in Dirac notation~\cite{nielsen}. Vectors of the
form $\ket{.}$ (called ``\emph{kets}'') are considered as columns
vector, and therefore $\ket0 =
\left(\begin{smallmatrix}1\\0\end{smallmatrix}\right)$, $\ket1 =
\left(\begin{smallmatrix}0\\1\end{smallmatrix}\right)$, and
$\alpha\ket0+\beta\ket1 =
\left(\begin{smallmatrix}\alpha\\\beta\end{smallmatrix}\right)$.
Horizontal juxtaposition of wires is interpreted with the
\emph{Kronecker}, or \emph{tensor} product. The tensor product of
spaces $\mathcal V$ and $\mathcal W$ whose bases are respectively
$\{v_i\}_i$ and $\{w_j\}_j$ is the vector space of basis $\{v_i\otimes
w_j\}_{i,j}$, where $v_i\otimes w_j$ is a formal object consisting of
a pair of $v_i$ and $w_j$. We denote $\ket{x}\otimes\ket{y}$ as
$\ket{xy}$. In the interpretation of spiders, we use the notation
$\ket{0^m}$ to represent an $m$-fold tensor of $\ket{0}$.
As a shortcut notation, we write $\ket{\phi}$ for column
vectors consisting of a linear combinations of kets.
Shortcut notations are also used for two very useful states:
$\ket+{}:=\frac{\ket0{}+\ket1{}}{\sqrt2}$ and
$\ket-{}:=\frac{\ket0{}-\ket1{}}{\sqrt2}$.
Dirac also introduced the notation ``\emph{bra}'' $\bra{x}$, standing
for a row vector. So for instance, $\alpha\bra{0} + \beta\bra{1}$ is
$(\begin{smallmatrix}\alpha&\beta\end{smallmatrix})$. If $\ket{\phi} =
\alpha\ket0+\beta\ket1$, we then write $\bra\phi$ for the vector
$\overline{\alpha}\bra0+\overline{\beta}\bra1$ (with $\overline{(.)}$
the complex conjugation). The notation for tensors of bras is similar
to the one for kets. For instance, $\bra{x}\tensor\bra{y} = \bra{xy}$.
Using this notation, the scalar product is transparently the product
of a row and a column vector: $\braket\phi\psi$, and matrices can be
written as sums of elements of the form $\ketbra\phi\psi$. For
instance, the identity on $\mathbb{C}^2$ is
$\left(\begin{smallmatrix}1&0\\0&1\end{smallmatrix}\right) = {}$
$\left(\begin{smallmatrix}1&0\\0&0\end{smallmatrix}\right) +
\left(\begin{smallmatrix}0&0\\0&1\end{smallmatrix}\right) = {}$
$\left(\begin{smallmatrix}1\\0\end{smallmatrix}\right)\left(\begin{smallmatrix}1&0\end{smallmatrix}\right)
+
\left(\begin{smallmatrix}0\\1\end{smallmatrix}\right)\left(\begin{smallmatrix}0&1\end{smallmatrix}\right)
= {}$
$\ketbra00 + \ketbra11$.
For more information on how Hilbert spaces, tensors, compositions and
bras and kets work, we invite the reader to consult e.g.~\cite{nielsen}.

In the {standard interpretation}~\cite{zxorigin}, a diagram $D$ is mapped to a map between
finite dimensional Hilbert spaces of dimensions some powers of $2$:
$\interp{D}\in\cat{Qubit}:=\{\mathbb{C}^{2^n}\to\mathbb{C}^{2^m}\mid n,m\in\mathbb N\}$.

If $D$ has $n$ inputs and $m$ outputs, its interpretation is a map
$\interp{D}:\mathbb{C}^{2^n}\to\mathbb{C}^{2^m}$ (by abuse of
notation we shall use the notation $\interp{D}:n\to m$).
It is defined inductively as follows.

~\hfill
$\interp{
\InputIfFileExists{./figures/compo.tikz}{}{{\color{red}\colorbox{pink}{missing file : compo}}}
}=\interp{
\InputIfFileExists{./figures/D2.tikz}{}{{\color{red}\colorbox{pink}{missing file : D2}}}
}\circ\interp{
\InputIfFileExists{./figures/D1.tikz}{}{{\color{red}\colorbox{pink}{missing file : D1}}}
}$
\hfill
$\interp{
\InputIfFileExists{./figures/tensor.tikz}{}{{\color{red}\colorbox{pink}{missing file : tensor}}}
}=\interp{
\InputIfFileExists{./figures/D1.tikz}{}{{\color{red}\colorbox{pink}{missing file : D1}}}
}\otimes\interp{
\InputIfFileExists{./figures/D2.tikz}{}{{\color{red}\colorbox{pink}{missing file : D2}}}
}$
\hfill~

~\hfill $\interp{~
\colorbox{pink}{missing file : id}}}
~}=id_{\mathbb C^2}=\ketbra00+\ketbra11$
\hfill
$\displaystyle\interp{~
\InputIfFileExists{./figures/swap.tikz}{}{{\color{red}\colorbox{pink}{missing file : swap}}}
~}=\sum_{i,j\in\{0,1\}}\ketbra{ji}{ij}$
\hfill~

~\hfill
$\interp{~
\begin{tikzpicture}
	\begin{pgfonlayer}{nodelayer}
		\node [style=none] (0) at (-0.25, -0.125) {};
		\node [style=none] (1) at (0.25, -0.125) {};
		\node [style=none] (2) at (0, 0.125) {};
	\end{pgfonlayer}
	\begin{pgfonlayer}{edgelayer}
		\draw [bend left=90, looseness=1.75] (0.center) to (1.center);
	\end{pgfonlayer}
\end{tikzpicture}
\colorbox{pink}{missing file : cap}}}
~}=\interp{~
\begin{tikzpicture}
	\begin{pgfonlayer}{nodelayer}
		\node [style=none] (0) at (-0.25, 0.125) {};
		\node [style=none] (1) at (0.25, 0.125) {};
		\node [style=none] (2) at (0, -0.125) {};
	\end{pgfonlayer}
	\begin{pgfonlayer}{edgelayer}
		\draw [bend right=90, looseness=1.75] (0.center) to (1.center);
	\end{pgfonlayer}
\end{tikzpicture}
\colorbox{pink}{missing file : cup}}}
~}^\dagger
=\ket{00}{}+\ket{11}{}$ \hfill
$\interp{~
\begin{tikzpicture}
	\begin{pgfonlayer}{nodelayer}
		\node [style=none] (0) at (0, 0.25) {};
		\node [style=none] (1) at (0, -0.25) {};
		\node [style={{H box}}] (2) at (0, 0) {};
	\end{pgfonlayer}
	\begin{pgfonlayer}{edgelayer}
		\draw (0.center) to (1.center);
	\end{pgfonlayer}
\end{tikzpicture}
\colorbox{pink}{missing file : H}}}
~}=\ketbra+0+\ketbra-1$ \hfill~

~\hfill
$\interp{
\InputIfFileExists{./figures/gn-alpha.tikz}{}{{\color{red}\colorbox{pink}{missing file : gn-alpha}}}
}=\ketbra{0^m}{0^n}+e^{i\alpha}\ketbra{1^m}{1^n}$
\hfill
$\interp{
\InputIfFileExists{./figures/rn-alpha.tikz}{}{{\color{red}\colorbox{pink}{missing file : rn-alpha}}}
}=\ketbra{+^m}{+^n}+e^{i\alpha}\ketbra{-^m}{-^n}$
\hfill~

\subsection{Properties and structure}

In this section, we list several definitions and known results that we
shall be using in the remainder of the paper. See e.g.~\cite{vilmart-thesis} for
more information.

\smallskip
\noindent
{\bf Universality}.~
ZX-diagrams are \emph{universal} in the sense
that for any linear map $f:n\to m$, there exists a diagram $D$ of $\cat{ZX}$ such
that $\interp{D}=f$.

The price to pay for universality is that different diagrams can
possibly represent the same quantum operator. There exists however a
way to deal with this problem: an equational theory. Several
equational theories have been designed for various fragments of the
language \cite{backens2014stabilizer,JPV,HNW,JPV-universal,jeandel2019generic,vilmart2019nearminimal}.

\smallskip
\noindent
{\bf Core axiomatization.}~
Despite this variety, any ZX axiomatization builds upon the core set
of equations provided in Figure~\ref{fig:connectivity-rules}, meaning
that edges really behave as wires that can be bent, tangled and
untangled. They also enforce the irrelevance on the ordering of
inputs and outputs for spiders. Most importantly, these rules
preserve the standard interpretation given in
Section~\ref{sec:std-interp}. We will use these rules ---sometimes
referred to as ``\emph{only connectivity matters}''---, and the fact
that they preserve the semantics extensively in the proofs of
the results of the paper.

\begin{figure}[tb]
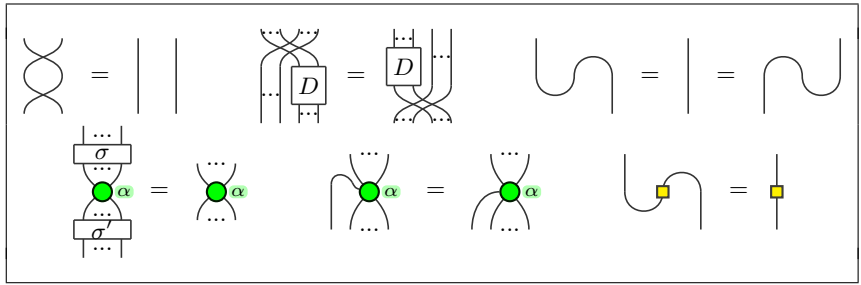

    \centering
    \begin{tabular}{|c|}
       \hline
        \\[-1ex]
      
\InputIfFileExists{./figures/swap-involution.tikz}{}{{\color{red}\colorbox{pink}{missing file : swap-involution}}}
$\qquad$
      
\InputIfFileExists{./figures/naturality-swap.tikz}{}{{\color{red}\colorbox{pink}{missing file : naturality-swap}}}
$\qquad$
      
\InputIfFileExists{./figures/snake.tikz}{}{{\color{red}\colorbox{pink}{missing file : snake}}}
\\[1ex]
      
\InputIfFileExists{./figures/gn-permutation-of-wires.tikz}{}{{\color{red}\colorbox{pink}{missing file : gn-permutation-of-wires}}}
$\qquad$
      
\InputIfFileExists{./figures/gn-input-output.tikz}{}{{\color{red}\colorbox{pink}{missing file : gn-input-output}}}
$\qquad$
      
\InputIfFileExists{./figures/H-input-output.tikz}{}{{\color{red}\colorbox{pink}{missing file : H-input-output}}}
\\[-1ex]
        \\
       \hline
  \end{tabular}
  \caption[Connectivity rules]{Connectivity rules. $D$ represents any
    ZX-diagram, and $\sigma,\sigma'$ any permutation of wires.}
    \label{fig:connectivity-rules}
\end{figure}

\smallskip
\noindent
{\bf Completeness.}~
The ability to transform a diagram $D_1$ into a
diagram $D_2$ using the rules of some axiomatization
$\operatorname{zx}$ (e.g. the core one presented in
Figure~\ref{fig:connectivity-rules}) is denoted
$\operatorname{zx}\vdash D_1=D_2$.

The axiomatization is said to be \emph{complete} whenever any two
diagrams representing the same operator can be turned into one another
using this axiomatization. Formally:\\
\centerline{$\displaystyle\interp{D_1}=\interp{D_2}\iff\operatorname{zx}\vdash D_1=D_2$}

It is common in quantum computing to work with restrictions of quantum
mechanics. Such restrictions translate to restrictions to particular
sets of diagrams -- e.g.~the $\frac\pi4$-fragment which consists of
all ZX-diagrams where the angles are multiples of $\frac\pi4$. There
exists axiomatizations that were proven to be complete for the
corresponding fragment (all the aforementioned references tackle the problem of completeness).

The developments of this paper are given for the ZX-Calculus in its
most general form, but everything in the following also
works for fragments of the language.

\smallskip
\noindent
{\bf Input and output wires.}~
An important result which will be used in the rest of the paper is the
following:

\begin{theorem}[Choi-Jamio\l{}kowski]\label{th:iso-wire}
  There are isomorphisms between $\{D\in\cat{ZX}\mid D:n\to m\}$ and
  $\{D\in\cat{ZX}\mid D:n-k\to k+m\}$ (when $k\leq n$).\qed
\end{theorem}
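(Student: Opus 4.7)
The plan is to realize the isomorphism concretely by bending wires with caps and cups, which is the ZX-diagrammatic incarnation of the Choi-Jamio\l{}kowski correspondence. Since ZX-diagrams form a compact closed structure (by virtue of the snake equation appearing in Figure~\ref{fig:connectivity-rules}), this bending makes sense syntactically and preserves the standard interpretation.

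More precisely, given $D : n \to m$ I would define $\Phi_k(D) : n-k \to k+m$ by placing $k$ caps above $D$, whose right ends are connected to $k$ of the inputs of $D$ (say, the last $k$, using a permutation $\sigma$ of wires to rearrange if needed), and whose left ends become the $k$ new outputs at the top. Explicitly, up to an appropriate permutation, $\Phi_k(D) = (\text{id}_{k} \otimes D) \circ (\eta^{\otimes k} \otimes \text{id}_{n-k})$. Conversely, given $D' : n-k \to k+m$, I would define $\Psi_k(D') : n \to m$ symmetrically by placing $k$ cups below $D'$, bending the first $k$ outputs of $D'$ back into inputs at the top.

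The core step is to check that $\Psi_k \circ \Phi_k$ and $\Phi_k \circ \Psi_k$ reduce to the identity. Unfolding the definitions, along each of the $k$ wires involved a cap is immediately followed (or preceded) by a cup on the same wire, which is exactly the snake equation from the connectivity rules of Figure~\ref{fig:connectivity-rules}. Applying it $k$ times collapses every such pair into an identity wire, yielding the identity map on the relevant set of diagrams, either up to the core axiomatization or, equivalently, at the level of the standard interpretation (which the connectivity rules preserve).

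The main obstacle is essentially bookkeeping: one has to track which $k$ input wires are being bent and thread the appropriate permutations $\sigma$ so that the snake reduction applies to the intended wires rather than to unrelated ones. Once these permutations are set up, the swap-naturality and spider-wire-permutation rules of Figure~\ref{fig:connectivity-rules} let the permutations slide around the diagram freely, and the remaining verification is a routine $k$-fold application of the snake equation.
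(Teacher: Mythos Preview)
Your proposal is correct and follows exactly the approach the paper indicates: bend inputs to outputs using caps (and back using cups), then invoke the snake equation and the other connectivity rules of Figure~\ref{fig:connectivity-rules} to show the two constructions are mutually inverse. The paper's own justification is just the one-line remark immediately following the theorem statement, and your write-up is a faithful expansion of that remark.
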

To see how this can be true, simply add cups or caps to turn input
edges to output edges (or vice versa), and use the fact that we work
modulo the rules of Figure~\ref{fig:connectivity-rules}.

When $k=n$, this isomorphism is referred to as the \emph{map/state
  duality}. A related but more obvious isomorphism between ZX-diagrams
is obtained by permutation of input wires (resp.~output wires).

\subsection{Notions of Graph Theory in ZX}

Theorem~\ref{th:iso-wire} is essential: it allows us to transpose
notions of graphs into ZX-Calculus. It is for instance possible to
define a notion of connectivity.

\begin{definition}[Connected Components]
  Let $D$ be a non-empty $\cat{ZX}$-diagram. Consider all of the
  possible decompositions
  with $D_1,...,D_k\in\cat{ZX}$ and $\sigma,\sigma'$ permutations of
  wires:\\
  $ D = 
\InputIfFileExists{./figures/connected-components.tikz}{}{{\color{red}\colorbox{pink}{missing file : connected-components}}}
 $\hfill
  \begin{minipage}{0.8\columnwidth}
  The largest such $k$ is called the number of connected components of
  $D$. It induces a decomposition. The induced $D_1,...,D_n$ are called the connected components
  of $D$. If $D$ has only one connected component, we say that $D$ is
  connected.
  \end{minipage}
\end{definition}

We can also consider the notions of paths, distance and cycles of
usual multi-graphs. We denote $\operatorname{Paths}(e_0,e_n)$ the set
of paths from edge $e_0$ to $e_n$. We denote $\operatorname{Paths}(D)$
(resp. $\operatorname{Cycles}(D)$) the set of paths (resp. cycles) of
diagram $D$. For a path $p$, we denote $|p|$ its length. We denote
$d(e_0,e_n)$ the distance i.e. the length of the shortest path between
$e_0$ and $e_n$.

\section{A Token Machine for ZX-diagrams}
\label{sec:token-machine}

Inspired by the Geometry of Interaction~\cite{goi0, goi1, goi2, goi3} and the associated notion of token
machine~\cite{danos1999reversible,asperti1995paths} for proof
nets~\cite{proofnets}, we define here a first token machine on pure
ZX-diagrams. The token consists of an edge of the diagram, a direction
(either going up, noted $\uparrow$, or down, noted $\downarrow$) and a
bit (state). The idea is that, starting from an input edge the token
will traverse the graph and duplicate itself when encountering an
n-ary node (such as the green and red) into each of the input / output
edges of the node. Notice that it is not the case for token machines
for proof-nets where the token never duplicates itself.  This
duplication is necessary to make sure we capture the whole linear map
encoded by the ZX-diagram. Due to this duplication, two tokens might
collide together when they are on the same edge and going in different
directions. The result of such a collision will depend on the states held by both tokens.
For a cup, cap or identity diagram, the
token will simply traverse it. As for the Hadamard node the token will
traverse it and become a superposition of two tokens with opposite
states.
Therefore, as tokens move through a diagram, some may be added,
multiplied together, or annihilated.

\begin{definition}[Tokens and Token States]
  Let $D$ be a ZX-diagram.  A \emph{token} in $D$ is a triplet
  $(e,d,b)\in \mathcal
  E(D)\times\{\downarrow,\uparrow\}\times\{0,1\}$. We shall omit the
  commas and simply write $(e~d~b)$.
  The set of tokens on $D$ is written $\cat{tk}(D)$.  A \emph{token
    state} $s$ is then a multivariate polynomial over $\mathbb C$,
  evaluated in $\cat{tk}(D)$.
  We define $\cat{tkS}(D) := \mathbb C[\cat{tk}(D)]$ the algebra of
  multivariate polynomials over $\cat{tk}(D)$.

  In the token state $t = \sum_i\alpha_i\,t_{1,i}\cdots{}t_{n_i,i}$,
  where the $t_{k,i}$'s are tokens, the components
  $\alpha_i\,t_{1,i}\cdots{}t_{n_i,i}$ are called the \emph{terms} of
  $t$.
\end{definition}

A monomial $(e_1\,d_1,b_1)\cdots(e_n\,d_n,b_n)$ encodes the state of
$n$ tokens in the process of flowing in the diagram $D$. A token state
is understood as a \emph{superposition} ---a linear combination--- of
multi-tokens flowing in the diagram.

\begin{convention}\label{conv:poly}
  In token states, the sum ($+$) stands for the superposition while the
  product stands for additional tokens within a given diagram. We follow the
  usual convention of algebras of polynomials: for instance, if $t_i$
  stands for some token $(e_i~d_i~b_i)$, then
  $
  (t_1+t_2)t_3 = (t_1t_2)+(t_1t_3),
  $
  that is, the superposition of $t_1$,$t_2$ flowing in $D$ and
  $t_1$,$t_3$ flowing in $D$. Similarly, we consider token states
  modulo commutativity of sum and product, so that for instance the
  monomial $t_1t_2$ is the same as $t_2t_1$. Notice that $0$ is an
  absorbing element for the product ($0\times t = 0$) and that $1$ is
  a neutral element for the same operation ($1\times t = t$).
\end{convention}

\subsection{Diffusion and Collision Rules}

The tokens in a ZX-diagram $D$ are meant to move inside $D$. The set
of rules presented in this section describes an \emph{asynchronous}
evolution, meaning that given a token state, we will rewrite only one
token at a time. The synchronous setting is discussed in
Section~\ref{sec:future-work}.

\begin{definition}[Asynchronous Evolution]\label{def:async}
  Token states on a diagram $D$ are equipped with two transition
  systems:
  \begin{itemize}
  \item a \emph{collision system} ($\rewrites_c$), whose effect is to annihilate tokens;
  \item a \emph{diffusion sub-system} ($\rewrites_d$), defining the
    flow of tokens within $D$.
  \end{itemize}
  The two systems are defined as follows. With $X\in \{d, c\}$
  and $1\leq j\leq n_i$,
  if $t_{i,j}$ are tokens in $\cat{tk}(D)$, then
  using Convention~\ref{conv:poly},\\
  \centerline{$\displaystyle
    \sum_i\alpha_it_{i,1}\cdots t_{i,j}\cdots t_{i,n_i}
    \rewrites_X
    \sum_i\alpha_it_{i,1}\cdots\left(\sum_k\beta_k t'_k\right)\cdots t_{i,n_i}
  $}
  provided that $t_{i,j}\rewrites_X \sum_k\beta_k t'_k$ according to the
  rules of Table~\ref{tab:async-rules}. In the table, each rule
  corresponds to the interaction with the primitive diagram
  constructor on the left-hand-side. Variables $x$ and $b$ span
  $\{0,1\}$, and $\neg$ stands for the negation.  In the green-spider
  rules, $e^{i\alpha{}x}$ stands for the the complex number
  $\cos(\alpha{}x)+i\sin(\alpha{}x)$ and not an edge label.

  Finally, as it is customary for rewrite systems, if $(\to)$ is a
  step in a transition system, $(\to^*)$ stands for the reflexive,
  transitive closure of $(\to)$.
\end{definition}

\begin{table}[t]
\scalebox{.9}{\begin{minipage}{1.08\textwidth}
\begin{align*}
  
\InputIfFileExists{./figures/id-nw.tikz}{}{{\color{red}\colorbox{pink}{missing file : id-nw}}}

  &\quad
  & (e_0\downarrow x)(e_0\uparrow x)
  &{}\rewrites_c 1\quad
    (e_0\downarrow x)(e_0\uparrow \neg x)
    \rewrites_c 0
    \tag{Positive/Negative Collision}
  \\
  
\InputIfFileExists{./figures/cup-nw.tikz}{}{{\color{red}\colorbox{pink}{missing file : cup-nw}}}

  && (e_b\downarrow x)
  &{}\rewrites_d (e_{\neg b}\uparrow x)
    \tag{$
\InputIfFileExists{./figures/cup.tikz}{}{{\color{red}\colorbox{pink}{missing file : cup}}}
$-diffusion}
  \\
  
\InputIfFileExists{./figures/cap-nw.tikz}{}{{\color{red}\colorbox{pink}{missing file : cap-nw}}}

  && (e_b\uparrow x)
  &{}\rewrites_d (e_{\neg b}\downarrow x)
    \tag{
\InputIfFileExists{./figures/cap.tikz}{}{{\color{red}\colorbox{pink}{missing file : cap}}}
-diffusion}
  \\
  && (e_k\downarrow x)
  &{}\rewrites_d e^{i\alpha x}\prod_{i\neq k} (e_i\uparrow x) \prod_j
    (e'_j\downarrow x)
  \\[-1ex]
  && (e'_k\uparrow x)
  &{}\rewrites_d e^{i\alpha x}\prod_{j\neq k} (e'_j\downarrow x) \prod_i
    (e_i\uparrow x)
  \\
  &&
     (e_0\downarrow x)
  &{}\rewrites_d (-1)^x\frac1{\sqrt2}(e_1\downarrow x) +
    \frac1{\sqrt2}(e_1\downarrow \neg x)
  \\
  &&
     (e_1\uparrow x)
  &{}\rewrites_d (-1)^x\frac1{\sqrt2}(e_0\uparrow x) +
    \frac1{\sqrt2}(e_0\uparrow \neg x)\hspace{10ex}~
\end{align*}%
\begin{textblock}{1}(1,-2.1)
\InputIfFileExists{./figures/gn-alpha-nw.tikz}{}{{\color{red}\colorbox{pink}{missing file : gn-alpha-nw}}}
\end{textblock}\begin{textblock}{1}(1,-1)
\InputIfFileExists{./figures/H-nw.tikz}{}{{\color{red}\colorbox{pink}{missing file : H-nw}}}
\end{textblock}%
\begin{textblock}{2}(9.8,-.8)(
\InputIfFileExists{./figures/H.tikz}{}{{\color{red}\colorbox{pink}{missing file : H}}}
-Diffusion)\end{textblock}%
\begin{textblock}{2}(9.5,-1.8)(
\InputIfFileExists{./figures/gn.tikz}{}{{\color{red}\colorbox{pink}{missing file : gn}}}
-Diffusion)\end{textblock}
\end{minipage}}
\caption{Asynchronous token-state evolution, for all $x,b\in\{0,1\}$}
\label{tab:async-rules}
\end{table}

We aim at a transition system marrying both collision and diffusion
steps. However, for consistency of the system, the order in which we
apply them is important as illustrated by the following example.

\begin{example}
  \label{ex:problem}
  Consider the equality given by the ZX equational theories:
  
\InputIfFileExists{./figures/gn-special.tikz}{}{{\color{red}\colorbox{pink}{missing file : gn-special}}}
.

  \noindent
  If we drop a token with bit $0$ at the top, we hence expect to get a
  single token with bit $0$ at the bottom. We underline the token that
  is being rewritten at each step. This is what we get when giving
  the priority to collisions:
  $$\begin{array}{c@{\quad ::\quad}l} 
\InputIfFileExists{./figures/gn-special-nw.tikz}{}{{\color{red}\colorbox{pink}{missing file : gn-special-nw}}}
\quad
      &\quad (a\downarrow 0)\rewrites_d \underline{(b\downarrow
      0)}(c\downarrow 0) \rewrites_d (d\downarrow
      0)\underline{(c\uparrow 0)(c\downarrow 0)} \rewrites_c
      (d\downarrow 0) \end{array}$$ Notice that the collision
      $(c\uparrow 0)(c\downarrow 0)$ rewrites to $1$, and therefore
      the product $(d\downarrow 0)\times 1 = (d\downarrow 0)$.
    If however we decide to ignore the priority of collisions, we may end
    up with a non-terminating run, unable to converge to $(d\downarrow 0)$:
    $$ (a\downarrow 0)\rewrites_d \underline{(b\downarrow 0)}(c\downarrow 0)
          \rewrites_d (d\downarrow 0)\underline{(c\uparrow 0)}(c\downarrow 0)
          \rewrites_d (d\downarrow 0)(a\uparrow 0)(b\downarrow 0)(c\downarrow 0)
          \rewrites_d \dots
    $$
\end{example}

We therefore set a rewriting strategy as follows.

\begin{definition}[Collision-Free]
A token state $s$ of $\cat{tkS}(D)$ is called \emph{collision-free} if
for all $s'\in\cat{tkS}(D)$, we have $s\not\rewrites_c s'$.
\end{definition}

\begin{definition}[Token Machine Rewriting System]
  \label{def:tmrw}
  We define a transition system $\rewrites$ as \emph{exactly} one
  $\rewrites_d$ rule followed by all possible $\rewrites_c$ rules. In
  other words,
  $t \rewrites u$ if and only if there exists $t'$ such that $t\rewrites_d t'
  \rewrites_c^* u$ and $u$ is collision-free.
\end{definition}

In~\cite{goisync}, a token arriving at an input of a gate is
blocked until all the inputs of the gates are populated by a 
token, at which point all the tokens go through at once
(while obviously changing the state). The control is purely
classical: it is causal. In our approach, the state of the
system is global and there is no explicit notion of qubit. 
Instead, tokens collect the operation that is to be applied to the input qubits.

\subsection{Strong Normalization and Confluence}

The token machine Rewrite System of Definition~\ref{def:tmrw} ensures
that the collisions that can happen always happen. The system does not
a priori forbid two tokens on the same edge, provided that they have
the same direction. However this is something we want to avoid as
there is no good intuition behind it: We want to link the token
machine to the standard interpretation, which is not possible if two
tokens can appear on the same edge.

In this section we show that, under a notion of well-formedness
characterizing token uniqueness on each edge, the
Token State Rewrite System ($\rewrites$) is strongly normalizing and
confluent.

\begin{definition}[Polarity of a Term in a Path]
  Let $D$ be a ZX-diagram, and $p\in\operatorname{Paths}(D)$ be a path in
  $D$. Let $t=(e,d,x)\in\cat{tk}(D)$. Then:\\
 \phantom{.}\hfill$\displaystyle P(p,t)=\begin{cases}
    1&\text{if $e\in p$ and $e$ is $d$-oriented}\\
    -1&\text{if $e\in p$ and $e$ is $\neg d$-oriented}\\
    0&\text{if $e\notin p$}
  \end{cases}$\hfill~\\
  We extend the definition to subterms $\alpha\,t_1...t_m$ of a
  token-state $s$:\\
  \phantom{.}\hfill$\displaystyle P(p,0)=P(p,1)=0,
  \qquad
  P(p,\alpha\, t_1...t_m)=P(p,t_1)+...+P(p,t_m).$\hfill~\\
  In the following, we shall simply refer to such subterms as ``\emph{terms of $s$}''.
\end{definition}

\noindent
\begin{minipage}{.7\textwidth}
  \begin{example}
    In the (piece of) diagram presented on the right, the blue
    directed line $p = (e_0,e_1,e_2,e_3,e_4)$ is a path. The
    orientation of the edges in the path is represented by the arrow
    heads, and $e_3$ for instance is $\downarrow$-oriented in $p$
    which implies that we have $P(p,(e_3\uparrow x)) = -1$.
  \end{example}
\end{minipage}
\hfill

\InputIfFileExists{./figures/example-path-polarity.tikz}{}{{\color{red}\colorbox{pink}{missing file : example-path-polarity}}}

\begin{definition}[Well-formedness]
  Let $D$ be a ZX-diagram, and $s\in\cat{tkS}(D)$ a token state on
  $D$. We say that $s$ is well-formed if for every term $t$ in $s$
  and every path $p\in\operatorname{Paths}(D)$ we have $P(p,t)\in\{-1,0,1\}$.
\end{definition}

\begin{proposition}[Invariance of Well-Formedness]
  \label{prop:well-formedness-invariance}
  Well-formedness is preserved by $(\rewrites)$: if $s\rewrites^* s'$ and $s$ is well-formed, then $s'$ is well-formed.\qed
\end{proposition}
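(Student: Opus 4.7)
I would proceed by induction on the length of the rewrite sequence $s\rewrites^\ast s'$: the base case is immediate, so it suffices to show that a single $\rewrites$-step preserves well-formedness. By Definition~\ref{def:tmrw}, such a step consists of exactly one diffusion $\rewrites_d$ followed by a maximal sequence of collisions $\rewrites_c^\ast$ reaching a collision-free state. Moreover, by the asynchronous evolution of Definition~\ref{def:async}, each rewrite acts on a single term of the polynomial, so the analysis reduces to verifying that the bounds $P(p,t)\in\{-1,0,1\}$ are maintained at the term level.

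For the collision rules the argument is direct: each pair $(e\downarrow x)(e\uparrow y)$ removed by a $\rewrites_c$-step contributes exactly $0$ to $P(p,\cdot)$ for every path $p$, since either $e\notin p$ (both tokens contribute $0$) or $e\in p$ with some orientation, in which case the two contributions are opposite in sign and cancel. Since $P(p,0)=P(p,1)=0$, the polarity $P(p,\cdot)$ is strictly invariant under every $\rewrites_c$, and well-formedness is trivially preserved by collisions.

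For each of the six diffusion rules of Table~\ref{tab:async-rules}, I would perform a case analysis based on how a given path $p$ interacts with the edges of the rewritten node. The key observation is a flow-conservation property at the node: whenever $p$ properly transits it---entering via one edge and exiting via another---the polarities of the token removed by the rule and of the newly introduced tokens supported on the edges of $p$ match exactly. For instance, for the green-spider rule $(e_k\downarrow x)\rewrites_d e^{i\alpha x}\prod_{i\neq k}(e_i\uparrow x)\prod_j(e'_j\downarrow x)$: if $p$ enters the spider via $e_k$ (necessarily $\downarrow$-oriented in $p$) and exits via some $e_i$ with $i\neq k$ (necessarily $\uparrow$-oriented) or via some $e'_j$ (necessarily $\downarrow$-oriented), then the single relevant newly produced token has $P(p,\cdot)=+1$, matching the $+1$ from the removed token; all the other newly produced tokens live on edges outside $p$ and contribute $0$. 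A symmetric check works for the dual $(e'_k\uparrow x)$-rule, for the cup/cap rules, and, handling the two-term superposition, for the Hadamard rule.

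The main obstacle is the case where $p$ touches the rewritten node via only a single one of its edges (typically when $p$ starts or ends at such an edge). There the diffusion can a priori shift $P(p,\cdot)$ by $\pm 1$, threatening the bound. To close this case, I would lift $p$ to an extended path $q$ that properly transits the node---prepending or appending one of the edges of the node not already used by $p$---so that $q$ is a legitimate path of $D$ on which the flow-conservation argument of the previous paragraph applies. Comparing $P(p,\cdot)$ and $P(q,\cdot)$ on both the old and new terms, the well-formedness of $t$ along $q$ then furnishes the missing bound on $P(p,t')$. After exhausting the subsequent $\rewrites_c^\ast$ steps, any residual imbalance on a single edge is annihilated by a collision, yielding a collision-free and well-formed state. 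Aggregating these term-level bounds over the whole polynomial finishes the induction step, and hence the proof.
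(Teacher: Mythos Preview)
Your plan mirrors the paper almost exactly: reduce to a single step, observe that collisions leave every $P(p,\cdot)$ invariant, and for a diffusion do a case split on how the path $p$ meets the generator, with a flow-conservation argument covering the case where $p$ genuinely transits the node.

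The one substantive divergence---and the place where your argument does not close---is the boundary case where $p$ meets the generator in exactly one edge (so that edge is an extremity of $p$). You propose to \emph{extend} $p$ to a path $q$ that does transit the node and to invoke well-formedness of $t$ along $q$. Flow conservation then gives $P(q,t')=P(q,t)\in\{-1,0,1\}$, but this does not bound $P(p,t')$: the gap $P(q,t')-P(p,t')$ equals the contribution on the appended edge, and in $t'$ that edge carries a freshly produced token (possibly on top of whatever $t$ already had there), so the gap is not controlled and the bound on $P(q,t')$ says nothing useful about $P(p,t')$. Concretely, if $p$ ends at a spider along $e_0$ and you append an output $e'_1$, one only recovers $P(p,t')=P(p,t)-1$, which a priori could be $-2$. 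The paper goes the opposite way: it \emph{shrinks} $p$ by dropping the extremal edge $e$, notes that $p\setminus\{e\}$ is still a path, and checks directly that $P(p,t')=P(p\setminus\{e\},t)$; well-formedness of $t$ on the shorter path then gives $P(p,t')\in\{-1,0,1\}$ immediately. Restriction, not extension, is what supplies the missing inequality.

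Your final appeal to subsequent collisions to ``annihilate any residual imbalance'' is also not needed and does not rescue the boundary case: the term produced by the diffusion need not contain any colliding pair on the offending path, so no $\rewrites_c$-step will bring $P(p,\cdot)$ back into range. In the paper's argument each $\rewrites_d$ step already preserves well-formedness on its own, and the collisions play no corrective role.
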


Well-formedness prevents the unwanted scenario of having two tokens on
the same wire, and oriented in the same direction
(e.g.~$(e_0\downarrow x)(e_0\downarrow y)$). As shown in the
Proposition~\ref{prop:well-formed-terms-charcterisation}, this
property is in fact stronger.

\begin{proposition}[Full Characterisation of Well-Formed Terms]
  \label{prop:well-formed-terms-charcterisation}
  Let $D$ be a ZX-diagram, and $s\in \cat{tkS}(D)$ be \emph{not}
  well-formed, i.e.~there exists a term $t$ in $s$, and
  $p\in\operatorname{Paths}(D)$ such that $|P(p,t)|\geq2$. Then we can rewrite
  $s\rewrites s'$ such that a term in $s'$ has a product of at least
  two tokens of the form $(e_0,d,\_)$.\qed
\end{proposition}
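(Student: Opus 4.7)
The plan is to argue by strong induction on the length of a shortest path $p$ with $|P(p,t)|\geq 2$, using Proposition~\ref{prop:well-formedness-invariance} (polarity invariance) to keep control of the token budget throughout the rewrite chain. If some edge of $t$ already carries two tokens of the form $(e,d,\_)(e,d,\_)$, the result holds with $s'=s$; I therefore assume no such pair, which bounds every per-edge contribution to $P(p,\cdot)$ by $\pm 1$, and (by swapping $\downarrow\leftrightarrow\uparrow$ on edges if needed) assume $P(p,t)=+2$.

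For the base case $|p|=1$, the single edge $e_0$ must carry two tokens whose signed contributions both equal $+1$, which forces them to be $d$-oriented for the same $d$, giving the required pair immediately. For the inductive step with $p=(e_0,\ldots,e_n)$, shortest-path minimality forces $c_0=c_n=+1$: indeed $P((e_1,\ldots,e_n),t)=P(p,t)-c_0$, and this strict sub-path satisfies $|P|\leq 1$, which combined with $c_0\in\{-1,0,1\}$ yields $c_0=+1$, and symmetrically for $c_n$. Let $t_0=(e_0,d_0,x_0)$ be the unique $+1$-contributing token on $e_0$, with $d_0$ aligned with the orientation of $e_0$ in $p$, and let $v_0$ be the node at the downstream end of $e_0$ along $p$. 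I then apply one $\rewrites_d$ step to $t_0$ using the rule of Table~\ref{tab:async-rules} corresponding to $v_0$ (green or red spider, Hadamard, cup, or cap). In each node-type case, a direct inspection---also forced by Proposition~\ref{prop:well-formedness-invariance} since polarity must remain $+2$---shows that the new term contains a token on $e_1$ whose direction matches the orientation of $e_1$ in $p$, i.e.\ contributes $+1$ to $P(p,\cdot)$. If $c_1=+1$ already, the pre-existing and freshly produced tokens on $e_1$ share a direction and form the required pair; since collisions only annihilate \emph{opposite}-direction pairs, this pair survives the $\rewrites_c^*$ phase and is present in $s'$. If $c_1\leq 0$, the diffusion effectively slides the boundary $+1$ from $e_0$ onto $e_1$ (any collision at $e_1$ preserves polarity by Proposition~\ref{prop:well-formedness-invariance}), so the resulting state contains a term whose restriction to the strictly shorter sub-path $(e_1,\ldots,e_n)$ still has polarity $+2$; the induction hypothesis, read under the natural interpretation of $s\rewrites s'$ as $s\rewrites^* s'$, closes this branch.

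The main technical obstacle I foresee is a high-degree spider at $v_0$, whose diffusion rule deposits tokens not only on $e_1$ but also on all other edges of the spider; these may then trigger a $0$-collision against pre-existing opposite-direction tokens with mismatched bits, annihilating the whole monomial before the pair can be observed in $s'$. I would handle this using the freedom in choosing which $+1$-contributing token to diffuse: both boundaries $e_0$ and $e_n$ carry such a token, so the symmetric argument pushing from $e_n$ is always available as a fall-back. A cleaner closure would track all $+1$-contributing tokens and establish, via a combinatorial argument on the local environment of $v_0$, that one of them admits an immediate diffusion introducing no negative collision; this delicate book-keeping around off-path interactions is where the real technical work lies, while the rest of the proof is a direct application of polarity invariance and the explicit diffusion rules.
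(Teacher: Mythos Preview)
Your induction on path length, diffusing the $+1$-oriented token on $e_0$ across the incident node onto $e_1$ and then recursing on $(e_1,\ldots,e_n)$, is exactly the paper's argument. The paper's case split on $e_0$ is slightly wider (no token / two same-direction / one each direction / exactly one), but your preliminary reduction via shortest-path minimality to $c_0=+1$ lands you directly in its last sub-case, and from there the two arguments coincide.

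The obstacle you isolate---a spider diffusion depositing off-path tokens that then $0$-collide with pre-existing tokens and annihilate the monomial---is real if one insists on the packaged step $\rewrites$ of Definition~\ref{def:tmrw}, and your fall-backs do not close it: on a single $3$-input $1$-output spider with $t=(a_1\!\downarrow\!0)(a_2\!\downarrow\!1)(b\!\downarrow\!0)$ and path $(a_1,b)$, diffusing either $a_i$ produces a $0$-collision on the other, while $(b\!\downarrow\!0)$ is terminal and cannot diffuse, so neither the ``push from the other end'' nor the ``choose a collision-free boundary token'' strategy applies. The paper does not fight this battle: it performs a raw diffusion step (``the adequate diffusion rule on $(e_0,d,\_)$ will rewrite $t\rewrites\sum_q t_q$ with $P(p',t_q)=P(p,t)$'') and invokes the induction hypothesis on $t_q$ \emph{before} any collisions are forced; the doubled token $(e_i,d,\_)(e_i,d,\_)$ is then exhibited in some intermediate term, and no book-keeping around off-path collisions is ever carried out. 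In short, the ``real technical work'' you anticipate is absent from the paper's proof. One minor point: Proposition~\ref{prop:well-formedness-invariance} is not ``polarity invariance''---it preserves membership of $P$ in $\{-1,0,1\}$, which says nothing once $P(p,t)=2$; the polarity tracking you need comes from direct inspection of the diffusion rule on the two path edges, as in the proof of that proposition, not from the proposition itself.
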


Although well-formedness prevents products of tokens on the same wire,
it does not guarantee termination: for this we need to
consider polarities along cycles.

\begin{proposition}[Invariant on Cycles]
\label{prop:invariant-cycles}
Let $D$ be a ZX-diagram, and $c\in\operatorname{Cycles}(D)$ a cycle. Let
$t_1, \dots, t_n$ be tokens, and $s$ be a token state such that $t_1...t_n \rewrites^* s$.
Then for every non-null term $t$ in $s$ we have $P(c,t_1...t_n)=P(c,t)$.\qed
\end{proposition}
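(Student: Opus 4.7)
The plan is to prove the invariant by induction on the length of the reduction $t_1\cdots t_n \rewrites^* s$. The base case is immediate. For the inductive step it suffices to establish a \emph{single-step preservation} property: if $s \rewrites s'$ and every non-null term of $s$ has polarity $P_0$ along $c$, then so does every non-null term of $s'$. By Definition~\ref{def:tmrw}, a single $\rewrites$ step decomposes into one $\rewrites_d$ step followed by any number of $\rewrites_c$ steps, so it is enough to verify preservation for each individual diffusion and collision rule.

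A further simplification comes from the asynchronous nature of the evolution: when a rule fires inside a term $\alpha\, t_{i,1}\cdots t_{i,j}\cdots t_{i,n_i}$, the only tokens affected are $t_{i,j}$ (or the pair involved in a collision), and the contribution of the remaining tokens to $P(c, \cdot)$ is unchanged since $P$ is additive over products. Thus one reduces to a purely local check: for each rule of Table~\ref{tab:async-rules} with left-hand side $t^*$, and for each non-null monomial $t'$ appearing on the right-hand side, one must verify $P(c, t') = P(c, t^*)$.

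The easy cases are the negative collision, excluded because it yields the null term; the positive collision $(e_0\downarrow x)(e_0\uparrow x) \rewrites_c 1$, where the two tokens contribute opposite signs whenever $e_0 \in c$, so the sum vanishes and equals $P(c, 1) = 0$; and the binary diffusions for cup, cap, and Hadamard. Each binary node has only two incident edges, so either both lie outside $c$ or both lie on $c$ with the orientations forced by the node's topology (the cycle enters via one edge and exits via the other), and direct inspection of each rewrite confirms the match.

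The main obstacle is the spider diffusion, where the arity is arbitrary and the cycle $c$ may cross the spider several times. Let $A$, $B$ denote the number of top edges of the spider that lie in $c$ with $\downarrow$- and $\uparrow$-orientation in $c$ respectively, and $C$, $D$ the analogous counts for bottom edges. The key combinatorial identity is the conservation law $A + D = B + C$, reflecting the fact that at each visit the cycle enters the spider on one edge and exits on another. Expanding the monomial $e^{i\alpha x}\prod_{i\neq k}(e_i\uparrow x)\prod_j(e'_j\downarrow x)$ from the spider rule, its polarity along $c$ reduces to $B^{-k} - A^{-k} + C - D$, where the superscript $-k$ removes the contribution of $e_k$ when $e_k \in c$. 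The conservation law then yields $+1$, $-1$, or $0$ according to whether $e_k$ is $\downarrow$-oriented, $\uparrow$-oriented, or absent in $c$, which exactly matches $P(c, (e_k\downarrow x))$ in all three cases. The symmetric rule on $(e'_k\uparrow x)$ is handled identically, completing the case analysis and hence the induction.
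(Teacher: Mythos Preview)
Your proof is correct and follows essentially the same strategy as the paper: establish single-step preservation of the cycle polarity by case analysis on the rewrite rules. The paper's version is terser, simply pointing back to the case analysis in the proof of Proposition~\ref{prop:well-formedness-invariance} and observing that once the path-endpoint sub-cases are discarded (cycles have no endpoints) every remaining branch preserves $P$ exactly rather than merely keeping it in $\{-1,0,1\}$. Your treatment of the spider diffusion via the conservation identity $A+D=B+C$ is more explicit than the paper's and has the mild advantage of transparently covering cycles that traverse a spider more than once, a situation the referenced proof handles only implicitly.
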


This proposition tells us that the polarity is preserved inside a
cycle. By requiring the polarity to be $0$, we can show that the token
machine terminates. This property is defined formally in the
following.

\begin{definition}[Cycle-Balanced Token State]
  Let $D$ be a ZX-diagram, and $t$ a term in a token state on $D$.  We
  say that $t$ is \emph{cycle-balanced} if for all cycles
  $c\in\operatorname{Cycles}(D)$ we have $P(c,t)=0$.  We say that a token state
  is \emph{cycle-balanced} if all its terms are cycle-balanced.
\end{definition}

To show that being cycle-balanced implies termination, we need the
following intermediate lemma. This essentially captures the fact that
a token in the diagram comes from some other token that ``traveled''
in the diagram earlier on.

\begin{lemma}[Rewinding]
  \label{lemma:rewinding}
  Let $D$ be a ZX-diagram, and $t$ be a term in a well-formed token
  state on $D$, and such that $t\rewrites^*\sum_i\lambda_it_i$, with
  $(e_n,d,x)\in t_1$. If $t$ is cycle-balanced, then there exists a
  path $p=(e_0,...,e_n)\in\operatorname{Paths}(D)$ such that $e_n$ is
  $d$-oriented in $p$, and $P(p,t)=1$.\qed
\end{lemma}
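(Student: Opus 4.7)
The argument proceeds by induction on the length $N$ of the rewrite sequence $t \rewrites^* \sum_i \lambda_i t_i$. For the \emph{base case} ($N=0$), the sum reduces to $t$ itself and $(e_n, d, x)$ is a factor of the monomial $t$; taking the singleton path $p = (e_n)$ oriented in direction $d$, well-formedness forbids an additional same-direction token on $e_n$ (else $|P(p,t)| \ge 2$), and the standing convention that the term is collision-free rules out an opposite-direction one, so $P(p, t) = 1$.

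For the \emph{inductive step}, decompose $t \rewrites^{N-1} s \rewrites s'$; the last step is a single diffusion of some $\tau_0 \in s$ followed by the ensuing cascade of collisions. The tracked token $(e_n, d, x) \in t_1$ is either (i) already present in the corresponding monomial in $s$---in which case the IH applied to $s$ directly returns the required path---or (ii) produced by the diffusion of some $\tau_0 = (e_m, d_m, x')$. In case (ii), the rules of Table~\ref{tab:async-rules} force $e_m$ and $e_n$ to share the diffusing node, and determine the direction $d$ (and the value $x$, up to the Hadamard bit-flip) from $\tau_0$. Applying the IH to $\tau_0$ in $s$ yields a path $p' = (e_0, \dots, e_m)$ with $e_m$ being $d_m$-oriented in $p'$ and $P(p', t) = 1$; appending $e_n$ gives the candidate $p = (e_0, \dots, e_m, e_n)$, which is a valid sequence of edges since $e_m$ and $e_n$ share the diffusing node. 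A case analysis on the six diffusion rules then confirms that $e_n$ is $d$-oriented in $p$, the underlying invariant being that the orientation of travel through the diffusing node always matches the direction of the produced token.

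It remains to establish $P(p, t) = 1$, and this is the crux. Writing $P(p, t) = P(p', t) + \Delta = 1 + \Delta$ where $\Delta$ is the polarity contribution of the newly added edge $e_n$, the goal is to show $\Delta = 0$. If $e_n \in p'$, then appending $e_n$ closes a cycle $c \in \operatorname{Cycles}(D)$, and cycle-balancedness of $t$ combined with the cycle-invariance under rewriting of Proposition~\ref{prop:invariant-cycles} forces the closed portion to contribute nothing. If instead $e_n \notin p'$, then $\Delta$ collects only the polarity contributions of tokens of $t$ lying on $e_n$; well-formedness of $t$ along $p$ bounds $|P(p, t)| \le 1$, and a trajectory argument---observing that any opposite-direction token of $t$ on $e_n$ would have had to be annihilated during the rewriting in order for $(e_n, d, x)$ to survive in $s'$---forces $\Delta = 0$. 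The \emph{main obstacle} is precisely this polarity bookkeeping at the newly added edge: cycle-balancedness is the essential hypothesis that neutralizes walks which revisit an edge and close a loop, without which the extra polarity would contradict $P(p, t) = 1$.
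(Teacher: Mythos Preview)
Your induction is on the \emph{last} rewrite step; the paper's is on the \emph{first}. This is not a cosmetic difference---it is precisely what makes your polarity bookkeeping break down.

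In your case (ii), the induction hypothesis hands you a path $p'=(e_0,\ldots,e_m)$ with $P(p',t)=1$, and you append $e_n$. The quantity $\Delta = P(p,t)-P(p',t)$ is the contribution of tokens of the \emph{original} term $t$ sitting on $e_n$. You claim a ``trajectory argument'' rules out $\Delta=-1$: any $(e_n,\neg d,y)\in t$ would have to be annihilated for $(e_n,d,x)$ to survive in $s'$. But this is not true as stated. The direction $\neg d$ points \emph{toward} the generator $g$ shared by $e_m$ and $e_n$, so $(e_n,\neg d,y)$ can only leave $e_n$ by diffusing through $g$---long before $(e_n,d,x)$ is produced at step $N$. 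No collision with $(e_n,d,x)$ need ever occur. To exclude this scenario you would have to argue that $g$ cannot be visited twice along the monomial lineage, but that is exactly the statement the Rewinding Lemma is used to establish in the termination proof (Theorem~\ref{thm:termination}); invoking it here is circular. Your treatment of the cycle case ($e_n\in p'$) is similarly incomplete: after excising the cycle you still end up needing to control the contribution of $e_n$ in $t$, which brings you back to the same unresolved $\Delta$.

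The paper avoids the problem by peeling off the \emph{first} step $t\rewrites\sum_i\lambda_i t_i$ and applying the induction hypothesis to $t_1$, obtaining a path $p$ with $P(p,t_1)=1$. Now the single step relating $t$ and $t_1$ is local: one token $(e,d_e,x_e)\in t$ is replaced by its diffusion products, and the difference $P(p,t)-P(p,t_1)$ is computable edge-by-edge from that one rule. When an extension is needed, one \emph{prepends} $e$, and the contribution of $e$ to $P(e::p,t)$ is exactly $+1$ from the known token $(e,d_e,x_e)\in t$; well-formedness then rules out the bad cases. The point is that inducting on the first step keeps the ``unknown'' term ($t_1$) on the side where the IH gives full information, and the step back to $t$ only involves one explicitly known token. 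Your induction puts the unknown ($t$) on the side where you need edge-level information that the last step cannot supply.
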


We can now prove strong-normalization.

\begin{theorem}[Termination of well-formed, cycle-balanced token state]
  \label{thm:termination}
  Let $D$ be a ZX-diagram, and $s\in\cat{tkS}(D)$ be well-formed. The
  token state
  $s$ is strongly normalizing if and only if it is cycle-balanced.\qed
\end{theorem}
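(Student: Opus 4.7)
The plan is to prove the two directions separately: necessity of cycle-balance for strong normalization follows from the cycle-invariant of Proposition~\ref{prop:invariant-cycles}, and sufficiency follows from constructing a well-founded measure whose finiteness relies on the Rewinding Lemma (Lemma~\ref{lemma:rewinding}).

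\medskip
\noindent\textbf{Necessity ($s$ strongly normalizing implies cycle-balanced).} I argue the contrapositive. Suppose $s$ is not cycle-balanced, so there exist a cycle $c\in\operatorname{Cycles}(D)$ and a term $t$ in $s$ with $P(c,t)\neq 0$. For any $s'$ reached by $s\rewrites^* s'$, Proposition~\ref{prop:invariant-cycles} yields a non-null term $t'$ descending from $t$ in $s'$ with $P(c,t')=P(c,t)\neq 0$. Any such $t'$ must contain at least one token $(e,d,b)$ with $e\in c$, and every edge of a cycle is internal (both of its endpoints are spider or Hadamard nodes), so one of the diffusion rules of Table~\ref{tab:async-rules} applies to this token. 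Thus $s'$ is never a normal form of $(\rewrites)$, and iterating the construction produces an infinite $(\rewrites)$-sequence from $s$, contradicting strong normalization.

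\medskip
\noindent\textbf{Sufficiency ($s$ well-formed and cycle-balanced implies strongly normalizing).} By Definition~\ref{def:tmrw} each $(\rewrites)$-step is one $\rewrites_d$ followed by finitely many $\rewrites_c$ steps, and collisions strictly decrease the number of tokens in a term (the rule $(e\downarrow x)(e\uparrow x)\rewrites_c 1$ removes two tokens; $(e\downarrow x)(e\uparrow\neg x)\rewrites_c 0$ kills the term); so it suffices to bound the number of diffusion steps reachable from $s$. The plan is to assign to each (edge, direction) pair $(e,d)$ that can host a token in some descendant of $s$ a natural number $N(e,d)$ satisfying: for every diffusion rule firing on $(e,d,x)$ and producing a polynomial of monomials, every output token $(e',d',x')$ appearing anywhere in the right-hand side satisfies $N(e,d)>N(e',d')$. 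Then the multiset $\mu(t')=\{N(e,d):(e,d,b)\text{ occurs in }t'\}$, well-defined and bounded in size by well-formedness (at most one token per (edge, direction) per term, hence at most $2|\mathcal E(D)|$ entries), strictly decreases under $(\rewrites)$ in the multiset extension of the order on $\mathbb N$, which is well-founded; termination follows.

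\medskip
\noindent\textbf{Main obstacle.} The key step is constructing $N$. Let $G_s$ be the directed multigraph whose vertices are the (edge, direction) pairs reachable in descendants of $s$ and whose arrows are induced by the diffusion rules. Then $G_s$ is finite since $\mathcal E(D)$ is finite. I claim that $G_s$ is acyclic: a directed cycle in $G_s$ would, by concatenating the ancestries provided by Lemma~\ref{lemma:rewinding}, produce a closed path $p$ in $D$ (hence a cycle $c\in\operatorname{Cycles}(D)$) with $P(c,s)\neq 0$, contradicting the hypothesis that $s$ is cycle-balanced together with Proposition~\ref{prop:invariant-cycles}. Setting $N(e,d)$ to be the longest-path length from $(e,d)$ in $G_s$ then gives the required strict inequality for each rule of Table~\ref{tab:async-rules}. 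Making the construction of $G_s$ and the acyclicity argument rigorous is the technical heart of the proof; everything else is routine well-founded-order reasoning.
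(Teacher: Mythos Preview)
Your necessity argument is essentially the paper's: a term with $P(c,t)\neq0$ keeps a token on the cycle, so a diffusion is always available. (Minor point: Proposition~\ref{prop:invariant-cycles} does not \emph{produce} a non-null descendant---it only says what any non-null descendant looks like---but the paper's own argument is equally brief here, and the intended construction of an infinite run is the same.)

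The sufficiency argument is where you diverge from the paper, and where there is a genuine gap. Your plan hinges on the acyclicity of the graph $G_s$, and your justification (``concatenating the ancestries provided by Lemma~\ref{lemma:rewinding}'') does not go through. The Rewinding Lemma takes \emph{one} token $(e_n,d,x)$ appearing in \emph{one} descendant of \emph{one} term $t$ of $s$, and returns a path $p$ in $D$ with $P(p,t)=1$. A directed cycle in $G_s$ is a sequence of \emph{reachable} pairs $(e_1,d_1)\to\cdots\to(e_k,d_k)\to(e_1,d_1)$, but these pairs may be reached in different runs, in different terms, and the arrows are only ``potential'' diffusions---whether the target survives the mandatory collisions depends on the rest of the term. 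There is no single run, hence no single application of Rewinding, that ties them together, and the individual paths $p_i$ produced for each $(e_i,d_i)$ start at unrelated places and do not concatenate. You also do not explain why the resulting closed walk in $D$ should give a cycle $c$ with $P(c,s)\neq0$: the initial state $s$ may well have no token on $c$ at all while tokens still reach $c$ later.

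The paper avoids this machinery entirely. Instead of ranking $(e,d)$ pairs, it proves directly that within any single term-lineage each \emph{generator} can be crossed at most once: if a token has just exited $g$ along edge $e$ in term $t$, and some descendant $t'$ has a token entering $g$ along $e_n$, then one application of Rewinding to $(e_n,d,x)$ yields a path $p$ ending at $e_n$ with $P(p,t)=1$; appending $e$ gives a path (or cycle) with $P=2$, contradicting well-formedness. Termination then follows from the lexicographic measure $(\#\text{unvisited generators},\#\text{tokens})$. This is both shorter and uses the Rewinding Lemma exactly once, in a situation where its hypotheses are met by construction. If you want to salvage your DAG approach, the cleanest route is to prove the paper's ``each generator once'' statement first---but then the DAG and longest-path measure become redundant.
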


Intuitively, this means that tokens inside a cycle will cancel
themselves out if the token state is cycle-balanced. Since cycles are
the only way to have a non-terminating token machine, we are sure that
our machine will always terminate.

\begin{proposition}[Local Confluence]
  \label{prop:local_confluence}
  Let $D$ be a ZX-diagram, and $s\in\cat{tkS}(D)$ be well-formed and
  collision-free. Then, for all $s_1,s_2\in\cat{tkS}(D)$ such that
  $s_1~\setirwer~ s\rewrites s_2$, there exists $s'\in\cat{tkS}(D)$ such
  that $s_1\rewrites^* s'~{}^*\setirwer~ s_2$.\qed
\end{proposition}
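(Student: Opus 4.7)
The plan is to expand each rewrite step $s\rewrites s_i$ (for $i=1,2$) according to Definition~\ref{def:tmrw} as a single diffusion step $\rewrites_d$ acting on one (monomial, token) pair of $s$, followed by exhaustive collision cleanup $\rewrites_c^*$. Since $s$ is collision-free, every collision arising in a cleanup phase must involve at least one token freshly produced by the preceding diffusion. The local confluence argument then proceeds by a case analysis on the relative position of the two chosen (monomial, token) pairs in the polynomial representation of $s$.

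Before the case analysis I would establish a small auxiliary fact: the pure collision system $\rewrites_c$ is itself confluent, i.e.\ any two maximal $\rewrites_c^*$-reductions of the same state terminate at the same collision-free state. Indeed each collision rule annihilates one specific pair $(e\downarrow x)(e\uparrow y)$ inside one monomial, and two such pairs involving disjoint tokens commute. This makes the cleanup part of a $\rewrites$-step functional given the output of its diffusion step, so that the only non-determinism in $\rewrites$ comes from the choice of diffusion.

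Three cases then arise. \textbf{Case~1:} both steps target the same token of the same monomial of $s$. The rules of Table~\ref{tab:async-rules} are deterministic in the token, so the two intermediate states coincide and the auxiliary fact yields $s_1=s_2$. \textbf{Case~2:} the two steps target distinct monomials $T_1\neq T_2$ of $s$. Since $\rewrites_d$ modifies only its chosen monomial, the rule used to derive $s_2$ remains applicable inside $s_1$ (its target monomial $T_2$ being untouched there), and conversely; applying the complementary rule on each side and letting collisions complete produces a common successor $s'$. \textbf{Case~3:} both steps target the same monomial $T=t\cdot u\cdot m$ but distinct tokens $t\neq u$. Writing $P_t,P_u$ for the polynomials produced by the two rules, from $s_1$ I would rewrite $u$ in each surviving monomial produced by $t$'s diffusion (finitely many further $\rewrites$-steps), and symmetrically rewrite $t$ from $s_2$; using the collision-confluence fact, both reductions normalize the summand at $T$ to the fully expanded $P_t\cdot P_u\cdot m$ cleaned by $\rewrites_c^*$, yielding the common $s'$.

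The main obstacle lies in Case~3, where one must verify that the collisions triggered by diffusing $t$ against the waiting token $u$ in $s_1$ correspond symmetrically to those triggered by diffusing $u$ against $t$ in $s_2$. The structural property of Table~\ref{tab:async-rules} that makes this work is that every diffusion rule replaces a token on an edge $e$ by tokens on \emph{other} edges adjacent to the same node; hence the only collisions that can occur between the new factors of $P_t$ and the surviving token $u$ (and symmetrically between $P_u$ and $t$) live on edges adjacent to those nodes, and they can be matched pairwise between the two orderings. This pairwise matching, combined with the confluence of $\rewrites_c$, closes the diagram and produces the desired joining state $s'$.
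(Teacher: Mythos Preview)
Your overall strategy---case analysis on where the two diffusions act, together with confluence of the pure collision system $\rewrites_c$---matches the paper's. But Case~3 has a genuine gap. The claim that both orderings reduce the summand at $T$ to ``$P_t\cdot P_u\cdot m$ cleaned by $\rewrites_c^*$'' fails precisely in the subcase where diffusing $t$ produces a token that immediately collides with $u$, so that $u$ is \emph{consumed} during the cleanup that yields $s_1$. In that situation $s_1$ no longer contains $u$, your proposed follow-up ``rewrite $u$ in each surviving monomial'' is vacuous, and $s_1$ is not $P_t\cdot P_u\cdot m$ cleaned. Concretely, take the H-gate with $t=(e_0\!\downarrow\! x)$ and $u=(e_1\!\uparrow\! x')$: here $P_t$ consists of tokens on $e_1$ and $P_u$ of tokens on $e_0$, so $P_t\cdot P_u$ is already collision-free (four terms, two tokens each), whereas $s_1$ and $s_2$ are the scalars $\frac1{\sqrt2}\big((-1)^x\braket{x}{x'}+\braket{\neg x}{x'}\big)$ and $\frac1{\sqrt2}\big((-1)^{x'}\braket{x}{x'}+\braket{x}{\neg x'}\big)$. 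What must be shown here is $s_1=s_2$ \emph{directly}, and your final paragraph's ``matched pairwise'' gesture does not establish this scalar identity.

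The paper closes this by organising Case~3 according to the edge distance $d(e,e')$: the cases $d\geq2$ are essentially your commutation argument, but $d=1$ (both tokens adjacent to a common generator) is isolated, the well-formedness hypothesis---which your argument never invokes---is used to constrain the two token directions, and then $s_1=s_2$ is verified by explicit computation for each generator (cup, cap, green spider, H-gate). It is exactly this $d=1$ per-generator verification that your abstract argument skips.
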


\begin{corollary}[Confluence]
  \label{cor:confluence}
  Let $D$ be a ZX-diagram. The rewrite system $\rewrites$ is confluent
  for well-formed, collision-free and cycle-balanced token states.\qed
\end{corollary}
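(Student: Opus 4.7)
The plan is to deduce this corollary from Newman's lemma, which says that a locally confluent and strongly normalizing rewrite system is confluent. All the necessary ingredients have just been established in the preceding propositions; the work will mostly consist in checking that the three hypotheses (well-formed, collision-free, cycle-balanced) are invariants of $\rewrites$, so that the hypotheses needed by the local-confluence lemma and the termination theorem remain true throughout any reduction sequence starting from $s$.

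First I would fix a well-formed, collision-free, cycle-balanced token state $s_0\in\cat{tkS}(D)$, and argue that these three properties are preserved by $\rewrites$, so they hold at every state reachable from $s_0$. For well-formedness, this is exactly Proposition~\ref{prop:well-formedness-invariance}. For collision-freeness, it is built into Definition~\ref{def:tmrw}: by construction, a $\rewrites$-step ends in a collision-free state. For being cycle-balanced, I would use Proposition~\ref{prop:invariant-cycles}: if $s_0\rewrites^* s$ then, decomposing $s_0$ and $s$ into terms and considering any cycle $c$, the polarity $P(c,\cdot)$ is preserved along the rewriting; hence $P(c,t)=0$ for every non-null term $t$ of $s$ whenever this holds for $s_0$.

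Once these three invariants are secured, Theorem~\ref{thm:termination} gives strong normalization of $\rewrites$ starting from $s_0$, and Proposition~\ref{prop:local_confluence} gives local confluence at every reachable (well-formed, collision-free) state. Applying Newman's lemma to the reduction graph rooted at $s_0$ then yields confluence from $s_0$, which is exactly the statement of the corollary.

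I do not expect any major obstacle. The only slightly delicate point is the cycle-balanced invariance: Proposition~\ref{prop:invariant-cycles} is phrased for a single initial product $t_1\cdots t_n$ of tokens, so I would make explicit that the statement extends by linearity to arbitrary token states (rewriting acts term by term according to Definition~\ref{def:async}, so the cycle polarity of each surviving term in $s$ equals that of the term of $s_0$ from which it originated). Once that is spelled out, the proof reduces to a one-line invocation of Newman's lemma.
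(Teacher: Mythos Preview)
Your proposal is correct and matches the paper's intended argument: the corollary is stated without proof (just a \qed), and the obvious route is precisely Newman's lemma applied to the combination of Theorem~\ref{thm:termination} (strong normalization) and Proposition~\ref{prop:local_confluence} (local confluence), together with the invariance of the three hypotheses along $\rewrites$. Your care in spelling out why cycle-balancedness is preserved (via linearity and Proposition~\ref{prop:invariant-cycles}) is a useful addition that the paper leaves implicit.
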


\begin{corollary}[Uniqueness of Normal Forms]
  \label{cor:uniquenessNF}
  Let $D$ be a ZX-diagram. A well-formed and cycle-balanced token
  state admits a unique normal form under the rewrite system
  $\rewrites$.\qed
\end{corollary}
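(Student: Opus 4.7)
The plan is to follow the classical Newman-style argument: strong normalization together with confluence of the rewrite system implies uniqueness of normal forms. Strong normalization is exactly the content of Theorem~\ref{thm:termination} applied to our well-formed and cycle-balanced $s$, which guarantees that every reduction sequence from $s$ terminates and, in particular, that at least one normal form exists.

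For confluence, the immediate ingredient is Corollary~\ref{cor:confluence}, which however carries the extra assumption that the source state be collision-free. Two observations mitigate this gap. First, by Proposition~\ref{prop:well-formedness-invariance} well-formedness is preserved under $\rewrites$, and by Proposition~\ref{prop:invariant-cycles} the polarity along each cycle is preserved, so cycle-balancedness is preserved too. Second, by the very definition of $\rewrites$ in Definition~\ref{def:tmrw}, every one-step reduct of $s$ is already collision-free. Hence, starting from any first-step reduct of $s$, Corollary~\ref{cor:confluence} directly provides confluence and therefore a unique normal form.

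It then suffices to show that any two distinct first-step reducts $s_1, s_2$ of $s$ (obtained by applying $\rewrites_d$ to different tokens of $s$ or by different rules of Table~\ref{tab:async-rules}) share a common $\rewrites^*$-successor. This is a commutation argument for the first step: the two $\rewrites_d$ steps act on distinct tokens of the monomial and, modulo commutativity of the product in $\cat{tkS}(D)$ (Convention~\ref{conv:poly}), can be applied in either order to produce the same intermediate diffused state. The subsequent $\rewrites_c^*$ cleanup produces the same collision-free result because the collision rules of Table~\ref{tab:async-rules} act locally on single edges and each collision is resolved deterministically, so independent collisions can be discharged in any order. From this common reduct, Corollary~\ref{cor:confluence} finishes the proof: both $n_1$ and $n_2$ must coincide with its unique normal form.

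The main obstacle I anticipate is precisely this commutation-of-first-step argument when $s$ is not collision-free: because $\rewrites$ bundles a single $\rewrites_d$ with a maximal $\rewrites_c^*$ cleanup, the interaction between the freshly diffused tokens and any pre-existing collisions in $s$ needs to be controlled. A clean auxiliary lemma stating that $\rewrites_c$ is itself strongly confluent, so that the order in which collisions are resolved is irrelevant, would cut through this subtlety. Once that lemma is in hand, the argument above reduces the general case to the collision-free case already handled by Corollary~\ref{cor:confluence}, and the termination from Theorem~\ref{thm:termination} upgrades confluence to uniqueness of normal forms.
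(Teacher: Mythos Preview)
The paper supplies no argument beyond the \qed{} mark; the intended reading is simply ``strong normalisation (Theorem~\ref{thm:termination}) plus confluence (Corollary~\ref{cor:confluence}) yields unique normal forms''. You are right to flag that the collision-free hypothesis present in Corollary~\ref{cor:confluence} has been silently dropped, and to try to bridge the gap.

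The gap, however, cannot be closed along the lines you propose. The auxiliary lemma you suggest---that $\rewrites_c$ is strongly confluent on well-formed terms---is already false: from $(a\downarrow0)(a\uparrow0)(a\uparrow1)$ one admissible collision yields $(a\uparrow1)$ while the other yields $0$, and both are $\rewrites_c$-normal. Worse, this phenomenon defeats the corollary itself. On $D=Z_1^2(0)$ with inputs $a_1,a_2$ and output $b$, the token state
\[
s \;=\; (a_1\downarrow0)(a_1\uparrow1)(a_2\downarrow0)
\]
is well-formed (one checks $P(p,s)\in\{-1,0,1\}$ for every path $p$) and trivially cycle-balanced, yet not collision-free. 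Diffusing $(a_2\downarrow0)$ through the spider produces $(a_1\downarrow0)(a_1\uparrow1)(a_1\uparrow0)(b\downarrow0)$, and the two possible collision cleanups give $s\rewrites 0$ and $s\rewrites (a_1\uparrow1)(b\downarrow0)$, both already $\rewrites$-normal. Hence $s$ admits two distinct normal forms and the corollary, as literally stated, fails. Your first-step commutation argument breaks precisely here: the pre-existing collision at $a_1$ is not ``independent'' of the diffusion, since diffusing $(a_2\downarrow0)$ creates a second up-token at $a_1$ that competes with $(a_1\uparrow1)$ for the single down-token.

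The honest repair is to keep the collision-free hypothesis, matching Corollary~\ref{cor:confluence}. This is harmless for the paper's applications: the initial states in Propositions~\ref{prop:single-token-input} and~\ref{prop:n-tokens} are collision-free, and in Theorem~\ref{thm:arbitrary-wire-init} the state $(e\downarrow x)(e\uparrow x)$ has its two tokens carrying the \emph{same} bit, so whichever one is diffused first, the result is collision-free and the two one-step reducts rejoin after one further diffusion.
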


\subsection{Semantics and Structure of Normal Forms}

In this section, we discuss the structure of normal forms, and relate
the system to the standard interpretation presented in
Section~\ref{sec:ZX}.

\begin{proposition}[Single-Token Input]
\label{prop:single-token-input}
Let $D:n\to m$ be a connected $\cat{ZX}$-diagram with $\mathcal I(D)=[a_i]_{0<i\leq n}$ and $\mathcal O(D)=[b_i]_{0<i\leq m}$, $0< k\leq n$ and $x\in\{0,1\}$, such that:\\
\centerline{$\displaystyle\interp{D}\circ(id_{k-1}\otimes \ket x\otimes id_{n-k})=\sum_{q=1}^{2^{m+n-1}}\lambda_q \ketbra{y_{1,q},...,y_{m,q}}{x_{1,q},...,x_{k-1,q},x_{k+1,q},...,x_{n,q}}$}
Then:\hfill$\displaystyle(a_k\downarrow x)\rewrites^*\sum_{q=1}^{2^{m+n-1}}\lambda_q \prod_i(b_i\downarrow y_{i,q})\prod_{i\neq k}(a_i\uparrow x_{i,q})$\hfill\qed
\end{proposition}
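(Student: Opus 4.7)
The single-token statement is too weak to serve as its own inductive hypothesis: once a diffusion step fires on a generator, sub-diagrams carry several tokens at once. The plan is therefore to prove a strengthened claim and specialise at the end. The strengthened claim reads: for any (not necessarily connected) ZX-diagram $E$ with inputs $(\alpha_i)_i$ and outputs $(\beta_j)_j$, any subsets $I_a$, $I_b$ of its inputs and outputs, and any inward-pointing monomial $s=\prod_{i\in I_a}(\alpha_i\downarrow x_i)\prod_{j\in I_b}(\beta_j\uparrow y_j)$, one has $s\rewrites^* s'$ where $s'$ is a linear combination of outward-pointing monomials on the complementary external edges whose coefficients reproduce the corresponding partial evaluation of $\interp{E}$ on the prescribed basis entries. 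The Proposition is then the specialisation $I_a=\{k\}$, $I_b=\emptyset$; connectivity of $D$ is exactly what forces diffusion to reach every remaining external wire, so that each term of the final state carries the full $n+m-1$ outward tokens.

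I would induct on the number of non-identity generators of $E$, after slicing $E$ into the standard layered form $E=L_N\circ\cdots\circ L_1$ with $L_i=\mathrm{id}^{\otimes p_i}\otimes g_i\otimes\mathrm{id}^{\otimes q_i}$. The base case $N=0$ is immediate: $E$ is a tensor of identity wires, inward tokens cross unchanged, matching $\interp{E}=\mathrm{id}$. For the inductive step, write $E=L_N\circ E'$ and analyse an inward token at the top of $E$: if it sits on an identity wire of $L_N$ it crosses immediately and the IH on $E'$ concludes; if it sits on a top-input of the generator $g_N$, the diffusion rule for $g_N$ yields outward tokens on the other top-inputs of $g_N$ (which are external inputs of $E$) together with inward tokens on the bottom wires of $g_N$ (which are top-inputs of $E'$), at which point the strengthened IH applies to $E'$ with the enlarged inward monomial. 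Tokens originally placed at outputs of $E$ (pointing upward into $E'$) are handled symmetrically. Coefficient-wise, the local rules exactly recover the contribution of each generator to $\interp{L_N}\circ\interp{E'}$: the factor $e^{i\alpha x}$ for the green spider, $\frac{1}{\sqrt2}(-1)^{xy}$ for the Hadamard, the bit-pairing $\ket{00}+\ket{11}$ for cup and cap, and the positive/negative collision rules implementing the inner product $\langle x|y\rangle=\delta_{x,y}$ that arises when composing $\interp{L_N}$ with $\interp{E'}$ along the internal wires.

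The main obstacle is to keep the bookkeeping honest under the interleaving of diffusion and collision prescribed by Definition~\ref{def:tmrw}. Bilinearity of $\rewrites$ together with Convention~\ref{conv:poly} let me distribute the strengthened IH term-by-term. Well-formedness is preserved by Proposition~\ref{prop:well-formedness-invariance}, and cycle-balance of the intermediate token states must be checked in order to invoke Theorem~\ref{thm:termination}; confluence (Corollary~\ref{cor:confluence}) then removes any need to track a particular reduction order, and the remaining verification is a verbose but mechanical bilinear computation mirroring the inductive definition of $\interp{E}$ from Section~\ref{sec:std-interp}.
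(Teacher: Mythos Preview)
Your strengthened inductive hypothesis is false as stated. Take $E$ to be two parallel identity wires, $I_a=\{1\}$, $I_b=\emptyset$: the monomial $(\alpha_1\downarrow x)$ rewrites terminally to $(\beta_1\downarrow x)$, whereas the partial evaluation $\interp{E}\circ(\ket x\otimes id)=\sum_z\ket{x,z}\bra z$ would demand the terminal state $\sum_z(\beta_1\downarrow x)(\beta_2\downarrow z)(\alpha_2\uparrow z)$. Tokens cannot spontaneously appear in a connected component that receives none, so for disconnected $E$ the machine computes only the piece of $\interp{E}$ living on the components that are actually touched. Your remark that ``connectivity of $D$ is exactly what forces diffusion to reach every remaining external wire'' is correct, but it is needed \emph{inside} the induction, not only at the final specialisation.

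This breaks the induction because the layer decomposition $E=L_N\circ E'$ does not preserve connectivity: a connected $E$ with a spider on top typically yields a disconnected $E'$. Even patching the strengthened claim with the side condition ``every connected component carries at least one inward token'' does not save the step as written: if no inward token of $s$ sits on a wire of $g_N$, the components of $E'$ hanging below $g_N$ receive nothing from the top, and you cannot invoke the hypothesis on them until something has travelled up through $E'$ and back down through $g_N$---which is no longer a single peel-and-recurse.

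The paper sidesteps this by organising the induction differently. It first uses map/state duality (Theorem~\ref{th:iso-wire}) to reduce to a single input ($n=1$), so the unique inward token sits at the top. It then peels off the generator \emph{directly below that input} (not an arbitrary top-layer generator), which guarantees that the first diffusion sends one token into every connected component of the remaining diagram $D'$. The single-token hypothesis is then applied to each connected component separately, and the extra downward tokens on the other inputs of each component are removed by collisions, producing exactly the Kronecker deltas of the matrix product. In effect the paper keeps the single-token statement as its own inductive hypothesis and lets the decomposition into connected components do the job you wanted the multi-token strengthening to do; repairing your argument along these lines leads back to essentially this structure.
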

This proposition conveys the fact that dropping a single token in state $x$ on wire $a_k$ gives the same semantics as the one obtained from the standard interpretation on the ZX-diagram, with wire $a_k$ connected to the state $\ket x$.

Proposition \ref{prop:single-token-input} can be made more general. However, we first need the following result on ZX-diagrams:

\begin{lemma}[Universality of Connected ZX-Diagrams]
\label{lem:connected-universality}
Let $f:\mathbb C^{2^n}\to\mathbb C^{2^m}$. There exists a \emph{connected} ZX-diagram $D_f:n\to m$ such that $\interp{D_f}=f$.\qed
\end{lemma}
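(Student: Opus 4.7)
The plan is to first invoke the standard universality of ZX-diagrams to obtain \emph{some} diagram realising $f$, and then iteratively glue its connected components together with semantics-preserving bridges until only one remains. Formally, standard universality provides $D \in \cat{ZX}$ with $\interp{D} = f$, decomposable up to wire permutations as $D = \sigma' \circ (D_1 \otimes \cdots \otimes D_k) \circ \sigma$ where the $D_i$ are the connected components. I proceed by induction on $k$: the base $k = 1$ is immediate, and for $k \geq 2$ it suffices to exhibit a semantics-preserving rewrite that merges two components into one.

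The merging uses what I shall call a \emph{swap bridge}. Pick any edge $e_a$ of $D_a$ and any edge $e_b$ of $D_b$. Break each edge at a midpoint, and re-route the four resulting half-wires through two consecutive swap nodes $S_1$ and $S_2$: $S_1$ takes the upper halves and sends them (crossed) to $S_2$, which un-crosses them and delivers them to the lower halves. The swap-involution rule of Figure~\ref{fig:connectivity-rules} tells us that $S_2 \circ S_1 = id \otimes id$ on the pair of wires, so $\interp{D}$ is preserved by the rewrite. On the other hand, $S_1$ and $S_2$ each have legs reaching into both $D_a$ and $D_b$, so no non-trivial tensor decomposition can separate them from the rest; the merged diagram is connected in the sense of the paper's definition, even if $e_a$ or $e_b$ happens to be a bridge in its own component.

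The main obstacle is ensuring that every component actually supplies an edge for the bridge. This is automatic whenever the component has a dangling input or output wire, or any internal wire at all, which covers every generator except the scalar spider $Z_0^0(\alpha)$. For that lone case I rewrite $Z_0^0(\alpha)$ as $Z_1^1(\alpha)$ equipped with a self-loop from its output back to its input: both diagrams have interpretation $1 + e^{i\alpha}$ (the self-loop performs the partial trace that matches the spider-fusion identity), and the rewrite now exposes a usable edge. An empty component has interpretation $1$ and can simply be discarded. With each component thus prepared, the swap bridge applies uniformly, the induction on $k$ terminates, and the resulting connected diagram $D_f$ satisfies $\interp{D_f} = f$.
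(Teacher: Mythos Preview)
Your overall strategy (realise $f$ by universality, then merge components pairwise via a semantics-preserving rewrite) matches the paper's, and your self-loop replacement for $Z_0^0(\alpha)$ is fine. The gap is the swap bridge itself. The swap-involution rule of Figure~\ref{fig:connectivity-rules} that you invoke to argue $\interp{\cdot}$-preservation in fact says that $S_2\circ S_1$ and $id\otimes id$ are equal \emph{as ZX-diagrams}, not merely as linear maps. The rules of Figure~\ref{fig:connectivity-rules} are precisely the ``only connectivity matters'' rules modulo which diagrams are identified in this paper, and modulo which the notion of connected component is meant to be graph-theoretic (note also that Table~\ref{tab:async-rules} carries no diffusion rule for the swap: a swap is a wire crossing, not a node). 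Inserting two consecutive swaps on $(e_a,e_b)$ therefore returns the very diagram you started from, and the tensor decomposition separating $D_a$ from $D_b$ still witnesses at least two components. Swaps are structural morphisms; they cannot manufacture connectivity.

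The paper's remedy is to build the bridge out of genuine nodes: first force a green spider to appear in each component (via standard ZX rules such as inserting a trivial $Z_1^1(0)$ on a wire; your treatment of the scalar case is in the same spirit), then join the two spiders by extra edges whose removal is licensed by a ZX axiom lying \emph{outside} Figure~\ref{fig:connectivity-rules}. Because that connecting equation is not one of the topological rules, the bridged diagram is connected in the tensor-decomposition sense while still satisfying $\interp{D_f}=f$. Replacing your swap bridge by any such spider-based bridge repairs the argument.
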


\begin{proposition}[Multi-Token Input]
\label{prop:n-tokens}
Let $D$ be a \emph{connected} ZX-diagram with $\mathcal I(D)=[a_i]_{1\leq i\leq n}$ and $\mathcal O(D)=[b_i]_{1\leq i\leq m}$; with $n\geq 1$.\\
If: $\qquad\displaystyle\interp{D}\circ\left(\sum_{q=1}^{2^n}\lambda_q\ket{x_{1,q},...,x_{n,q}}\right) = \sum_{q=1}^{2^m}\lambda_q'\ket{y_{1,q},...,y_{m,q}}$\hfill~\\
then:$\qquad\displaystyle\sum_{q=1}^{2^n}\lambda_q\prod_{i=1}^n(a_i\downarrow x_{i,q}) \rewrites^*\sum_{q=1}^{2^m}\lambda_q'\prod_{i=1}^m(b_i\downarrow y_{i,q})$\hfill\qed
\end{proposition}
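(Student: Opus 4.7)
The plan is to peel off a single input token using Proposition~\ref{prop:single-token-input} and let the upward tokens it produces on the remaining input wires collide with the input tokens that are still present. First, since Definition~\ref{def:async} already allows $\rewrites$ to act independently on each term of a sum, it is enough to handle a single basis-state input $\prod_{i=1}^n(a_i\downarrow x_i)$: rescaling by $\lambda_q$ and summing over $q$ then yields the general statement, because the resulting output coefficients $\lambda'_{q'}$ are by construction those of $\interp{D}\sum_q\lambda_q\ket{x_{1,q},\ldots,x_{n,q}}$ expanded in the computational basis. The starting state $\prod_i(a_i\downarrow x_i)$ is well-formed (the $a_i$ are all distinct) and cycle-balanced (input edges sit on no cycle of $D$), so Theorem~\ref{thm:termination} and Corollary~\ref{cor:confluence} guarantee a unique normal form.

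I then fix the index $k=1$ and take the standard decomposition $\interp{D}\circ(\ket{x_1}\otimes id_{n-1})=\sum_q\mu_q\ketbra{y_{1,q},\ldots,y_{m,q}}{x_{2,q},\ldots,x_{n,q}}$. Proposition~\ref{prop:single-token-input} gives $(a_1\downarrow x_1)\rewrites^*\sum_q\mu_q\prod_i(b_i\downarrow y_{i,q})\prod_{i\geq 2}(a_i\uparrow x_{i,q})$. I reproduce the same sequence of diffusion and internal-collision steps inside the larger product $\prod_i(a_i\downarrow x_i)$ by applying Definition~\ref{def:async} to the $(a_1\downarrow x_1)$ subterm. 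Intermediate diffusions only populate edges internal to $D$ and never touch the other input wires directly; the only additional interactions arise once upward tokens eventually reach some $a_i$ with $i\geq 2$, at which point the mandatory boundary collisions with $(a_i\downarrow x_i)$ fire. By Corollary~\ref{cor:confluence}, the order in which these boundary collisions are interleaved with the remaining internal rewriting does not affect the final normal form.

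Once all descendants of $(a_1\downarrow x_1)$ have been reduced, each boundary collision $(a_i\uparrow x_{i,q})(a_i\downarrow x_i)$ reduces to $1$ when $x_{i,q}=x_i$ and to $0$ otherwise, leaving the state $\sum_{q\,:\,x_{i,q}=x_i\,\forall i\geq 2}\mu_q\prod_i(b_i\downarrow y_{i,q})$. On the analytic side, $\interp{D}\ket{x_1,\ldots,x_n}=\bigl[\interp{D}\circ(\ket{x_1}\otimes id_{n-1})\bigr]\ket{x_2,\ldots,x_n}=\sum_q\mu_q\braket{x_{2,q},\ldots,x_{n,q}}{x_2,\ldots,x_n}\ket{y_{1,q},\ldots,y_{m,q}}$, and the inner products pick out exactly the same index set. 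Re-indexing the surviving terms as $\sum_q\lambda_q'\ket{y_{1,q},\ldots,y_{m,q}}$ matches the coefficients on the right-hand side of the proposition.

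The main obstacle is the context-lifting argument in the middle paragraph: Proposition~\ref{prop:single-token-input} is a statement about an isolated token $(a_1\downarrow x_1)$, whereas here it must be invoked inside a product where $\rewrites$ requires every available collision to fire after each diffusion step. Confluence, together with the preservation of well-formedness (Proposition~\ref{prop:well-formedness-invariance}) and of cycle polarities (Proposition~\ref{prop:invariant-cycles}), does the real work by letting me pick any convenient schedule rather than explicitly tracking the interleaving of boundary collisions with internal diffusion.
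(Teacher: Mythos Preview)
Your approach is correct but genuinely different from the paper's. The paper never lifts a single-token run into a multi-token context. Instead it invokes Lemma~\ref{lem:connected-universality} to build a \emph{connected} auxiliary diagram $D'$ with a single input $a'$ and $\interp{D'}\ket{0}=\sum_q\lambda_q\ket{x_{1,q},\ldots,x_{n,q}}$, and then applies Proposition~\ref{prop:single-token-input} twice: once to $D'$ alone (yielding $(a'\downarrow 0)\rewrites^*\sum_q\lambda_q\prod_i(a_i\downarrow x_{i,q})$) and once to the connected composite $D\circ D'$ (yielding $(a'\downarrow 0)\rewrites^*\sum_{q'}\lambda'_{q'}\prod_i(b_i\downarrow y_{i,q'})$). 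Confluence in $D\circ D'$ then joins these two honest single-token reductions of the same start state. No context-lifting is needed.

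Your route --- peel off one input with Proposition~\ref{prop:single-token-input}, then collide the upward boundary tokens against the remaining $(a_i\downarrow x_i)$ --- works and avoids the auxiliary diagram and Lemma~\ref{lem:connected-universality} entirely. The soft spot is exactly where you flag it. You cannot literally ``defer'' the boundary collisions to the end and appeal to confluence: the relation $\rewrites$ forces all collisions after every diffusion, so a state carrying $(a_i\uparrow x_{i,q})(a_i\downarrow x_i)$ side by side is never a legal intermediate state of the multi-token run. What actually justifies the lifting is the more concrete observation that (i) diffusion can only place \emph{upward} tokens on dangling input edges, (ii) such tokens are terminal in the single-token run, so no later step of that schedule ever touches them. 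Hence removing $(a_i\uparrow z)$ and $(a_i\downarrow x_i)$ early via a forced collision never invalidates any subsequent step of the single-token sequence; term by term, the multi-token run under the strategy ``diffuse only descendants of $(a_1\downarrow x_1)$'' reaches exactly the single-token normal form with each $(a_i\uparrow x_{i,q})$ replaced by $\delta_{x_{i,q},x_i}$, and uniqueness of normal forms finishes. The paper's device buys freedom from this bookkeeping at the price of Lemma~\ref{lem:connected-universality}; your argument is more elementary but needs that extra paragraph made precise.
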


\begin{example}[CNOT]
\label{ex:CNOT}
In the ZX-Calculus, the CNOT-gate (up to some scalar) can be
constructed as follows:
$\interp{
\InputIfFileExists{./figures/cnot-ex-app.tikz}{}{{\color{red}\colorbox{pink}{missing file : cnot-ex-app}}}
}=\frac{1}{\sqrt{2}}\begin{pmatrix}1&0&0&0\\0&1&0&0\\0&0&0&1\\0&0&1&0\end{pmatrix}$

On classical inputs, this gate applies the NOT-gate on the second bit if and only if the first bit is at $1$.
Therefore if we apply the state $\ket{10}$ to it we get $\frac{1}{\sqrt{2}}\ket{11}$.

We demonstrate how the token machine can be used to get this result. Following Proposition~\ref{prop:n-tokens}, we start by initialising the Token Machine in the token state $(a_1\downarrow 1)(a_2\downarrow 0)$, matching the input state $\ket{10}$.

We underline each step that is being rewritten, and take the liberty to sometimes do several rewrites in parallel at the same time.
{\setlength{\mathindent}{0pt}
\begin{align*}
    \textstyle
    \Cline{(a_1\downarrow 1)}(a_2\downarrow 0)
    &\textstyle\rewrites_d  {\color{red}(b_1\downarrow 1)(e_1\downarrow 1)}\Cline{(a_2\downarrow 0)}
    \rewrites_d (b_1\downarrow 1)(e_1\downarrow 1){\color{red}\frac{1}{\sqrt{2}}\Big(\Cline[blue]{(e_3\downarrow 0)} + \Cline[green!60!black]{(e_3\downarrow 1)}\Big)}\\
    &\textstyle\rewrites_d \frac{1}{\sqrt{2}}(b_1\downarrow 1)\Cline{(e_1\downarrow 1)}\Big({\color{blue}(e_2\uparrow 0)(e_4\downarrow 0)} + {\color{green!60!black}(e_2\uparrow 1)(e_4\downarrow 1)}\Big)\\
    &\textstyle\rewrites_d {\color{red}\frac{1}{2}}(b_1\downarrow 1){\color{red}\Big(\Cline[blue]{(e_2\downarrow 0)}-\Cline[green!60!black]{(e_2\downarrow 1)}\Big)}\Big(\Cline[blue]{\Cline[green!60!black]{(e_2\uparrow 0)}}(e_4\downarrow 0) + (e_2\uparrow 1)(e_4\downarrow 1)\Big)\\
    &\textstyle\rewrites_c^2 \frac{1}{2}(b_1\downarrow 1)\Big({\color{blue}(e_4\downarrow 0)} + \big(\Cline[red]{(e_2\downarrow 0)}-\Cline[green!60!black]{(e_2\downarrow 1)}\big)\Cline[red]{\Cline[green!60!black]{(e_2\uparrow 1)}}(e_4\downarrow 1)\Big)\\
    &\textstyle\rewrites_c^2 \frac{1}{2}(b_1\downarrow 1)\Big(\Cline[red]{(e_4\downarrow 0)}-{\color{green!60!black}\Cline[blue]{(e_4\downarrow 1)}}\Big)\\
    &\textstyle\rewrites_d \frac{1}{2\sqrt2}(b_1\downarrow 1)\Big({\color{red}(b_2\downarrow 0)+(b_2\downarrow 1)}-{\color{blue}(b_2\downarrow 0)+(b_2\downarrow 1)}\Big)\\
    &\textstyle\quad= \frac{1}{\sqrt2}(b_1\downarrow 1)(b_2\downarrow 1)
\end{align*}}

The final token state corresponds to $\frac{1}{\sqrt{2}}\ket{11}$, as described by Proposition~\ref{prop:n-tokens}.
Notice that during the run, each invariants presented before holds and that due to confluence we could have rewritten the tokens in any order and still obtain the same result.
\end{example}

This proposition is a direct generalization of the
proposition~\ref{prop:single-token-input}. It shows we can compute the output of a diagram provided a particular input state. We can also recover the semantics of the whole operator by initialising the starting token state in a particular configuration.

\begin{theorem}[Arbitrary Wire Initialisation]
\label{thm:arbitrary-wire-init}
Let $D$ be a connected ZX-diagram, with $\mathcal I(D)=[a_i]_{1\leq i\leq n}$, $\mathcal O(D)=[b_i]_{1\leq i\leq m}$, and $e\in \mathcal E(D)\neq\emptyset$ such that $(e\downarrow x)(e\uparrow x) \rewrites^* t_x$ for $x\in\{0,1\}$ with $t_x$ terminal (the rewriting terminates by Corollary \ref{cor:uniquenessNF}). Then:\\
$\displaystyle\interp{D}=\!\sum_{q=1}^{2^{m+n}}\lambda_q\ketbra{y_{1\!,q}\ldots{}y_{m,q}}{x_{1\!,q}\ldots{}x_{n,q}}\implies
t_0+t_1 = \!\sum_{q=1}^{2^{m+n}}\lambda_q \prod_i(b_i\!\downarrow\!y_{i,q})\prod_{i}(a_i\!\uparrow\!x_{i,q})$\hfill\qed
\end{theorem}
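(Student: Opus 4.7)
My plan is to reduce the claim to Proposition~\ref{prop:n-tokens} by cutting the edge $e$ and invoking the resolution of the identity $\sum_x\ketbra{x}{x}=I$ on the cut wire. Specifically, I would form the auxiliary diagram $D^\sharp$ obtained from $D$ by cutting $e$: the half adjacent to one endpoint becomes a new output $e^{\text{out}}$, and the other half becomes a new input $e^{\text{in}}$. Standard ZX semantics then yields
\[
\interp{D}=\sum_x (I\otimes\bra{x})\circ\interp{D^\sharp}\circ(I\otimes\ket{x}),
\]
i.e., $\interp{D}$ is obtained from $\interp{D^\sharp}$ by tracing the new input--output pair. If $D^\sharp$ is disconnected (meaning $e$ was a bridge of $D$), the argument below applies component by component; otherwise I can invoke Proposition~\ref{prop:n-tokens} directly on the connected $D^\sharp$.

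Next, I would match the token dynamics. Because every rule in Table~\ref{tab:async-rules} is local to the node adjacent to an edge, the evolution of $(e\downarrow x)(e\uparrow x)$ in $D$ coincides step-by-step with that of $(e^{\text{in}}\downarrow x)(e^{\text{out}}\uparrow x)$ in $D^\sharp$ as long as no token revisits $e$. When a cycle of $D$ passing through $e$ generates a re-visit --- i.e., a new token $(e\downarrow y)$ or $(e\uparrow y)$ appears --- then by choosing an appropriate rewriting order (justified by confluence, Corollary~\ref{cor:confluence}), the re-visiting token collides with the opposite-direction initial token still waiting on the cut edge, and the positive/negative collision rules implement precisely the $\bra{x}\cdots\ket{x}$ projection imposed by the semantic trace above.

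I would then apply Proposition~\ref{prop:n-tokens} to $D^\sharp$: the down-token $(e^{\text{in}}\downarrow x)$ produces boundary tokens consistent with $\interp{D^\sharp}$ acting on $\ket{x}$ at the input $e^{\text{in}}$. The up-token $(e^{\text{out}}\uparrow x)$ at the output is handled by invoking the map/state duality of Theorem~\ref{th:iso-wire}: bending $e^{\text{out}}$ through a cap converts it into an input, and the up-token becomes a down-token there, again reducing to Proposition~\ref{prop:n-tokens}. Combined, the normal form realizes the matrix elements of $(I\otimes\bra{x})\circ\interp{D^\sharp}\circ(I\otimes\ket{x})$ on the boundary edges $a_i, b_i$. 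Summing over $x\in\{0,1\}$ collapses the trace and recovers $\interp{D}=\sum_q\lambda_q\ketbra{y_q}{x_q}$, yielding $t_0+t_1=\sum_q\lambda_q\prod_i(b_i\downarrow y_{i,q})\prod_i(a_i\uparrow x_{i,q})$ as claimed.

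The main obstacle I expect is the dynamical matching in the second step --- rigorously showing that cycles through $e$ in $D$ are faithfully simulated by collisions at the cut in $D^\sharp$. This will likely require an induction on the number of rewrite steps, carefully using confluence and cycle-balancedness to justify that the simulated evolution converges and matches the required matrix elements.
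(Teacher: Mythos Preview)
Your approach is essentially the paper's: cut $e$ to form $D^\sharp$, express $\interp{D}$ as a partial trace of $\interp{D^\sharp}$, split into the connected/bridge cases, and reduce to the known single-token result on the cut diagram. The paper streamlines your argument by invoking Proposition~\ref{prop:single-token-input} (not Proposition~\ref{prop:n-tokens}) and treating the up-token $(e_1\uparrow x)$ as a spectator that is eliminated by a single final collision rather than via map/state duality; the step-by-step correspondence you flag as the main obstacle is simply asserted there, and is indeed direct once both tokens are present on the cut diagram.
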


\begin{example}
If we take back the diagram from Example~\ref{ex:CNOT} and decide to initialize
any wire $e$ of the diagram in the state $(e\downarrow 0)(e\uparrow0) + (e\downarrow 1)(e\uparrow 1)$ and apply the rewriting as in Theorem~\ref{thm:arbitrary-wire-init} we in fact end up with the token state $\frac{1}{\sqrt{2}}\bigg((a_1\uparrow 0)(a_2\uparrow 0)(b_1\downarrow 0)(b_2\downarrow 0) + (a_1\uparrow 0)(a_2\uparrow 1)(b_1\downarrow 0)(b_2\downarrow 1) + (a_1\uparrow 1)(a_2\uparrow 0)(b_1\downarrow 1)(b_2\downarrow 1)  + (a_1\uparrow 1)(a_2\uparrow 1)(b_1\downarrow 1)(b_2\downarrow 0) \bigg)$ which matches the actual matrix of the standard interpretation.
\end{example}

\begin{remark}
At this point, it is legitimate to wonder about the benefits of the token machine over the standard interpretation for computing the semantics of a diagram. Let us first notice that when computing the semantics of a diagram à la Theorem \ref{thm:arbitrary-wire-init}, we get in the token state one term per non-null entry in the associated matrix (the one obtained by the standard interpretation).

We can already see that the token-based interpretation can be interesting if the matrix is sparse, the textbook case being $Z^n_n$ whose standard interpretation requires a $2^n\times2^n$ matrix, while the token-based interpretation only requires two terms (each with $2n$ tokens).

Secondly, we can notice that we can "mimic" the standard interpretation with the token machine. Consider a diagram decomposed as a product of slices (tensor product of generators) for the standard interpretation. Then, for the token machine, without going into technical details, we can follow the strategy that consists in moving token through the diagram one slice at a time. This essentially computes the matrix associated with each slice and its composition.

The point of the token machine however, is that it is versatile enough to allow for more original strategies, some of which may have a worst complexity, but also some of which may have a better one.
\end{remark}

\section{Extension to Mixed Processes}
\label{sec:mixed-processes}
The token machine presented so far worked for so-called \emph{pure}
quantum processes i.e.~with no interaction with the environment. To
demonstrate how generic our approach is, we show how to adapt it to
the natural extension of \emph{mixed} processes, represented with
completely positive maps (CPM). This in particular allows us to
represent quantum measurements.

\subsection{ZX-diagrams for Mixed Processes}

The interaction with the environment can be modeled in the ZX-Calculus by adding a
unary generator $\ground$ to the language~\cite{coecke2012environment,carette2019completeness}, that intuitively
enforces the state of the wire to be classical.
We denote with $\cat{ZX}^{\sground}$ the set of diagrams obtained by
adding $\ground$ this generator.

Similar to what is done in quantum computation, the standard
interpretation $\interp{.}^{\sground}$ for $\cat{ZX}^{\sground}$ maps
diagrams to CPMs. If $D\in\cat{ZX}$ we define $\interp{D}^{\sground}$
as $\rho\mapsto \interp{D}^\dagger\circ\rho\circ\interp{D}$, and we
set $\interp{\ground}^{\sground}$ as $\rho\mapsto\tr(\rho)$, where
$\tr(\rho)$ is the trace of $\rho$.

There is a canonical way to map a $\cat{ZX}^{\sground}$-diagram to a
$\cat{ZX}$-diagram in a way that preserves the semantics: the
so-called CPM-construction~\cite{cpmsellinger}. We define the map
(conveniently named) $\operatorname{CPM}$ as the map that preserves
compositions $(\_\circ\_)$ and $(\_\otimes\_)$ and such that:

\medskip

$\forall D\in\cat{ZX},~\ccpm{
\InputIfFileExists{./figures/D-no-name.tikz}{}{{\color{red}\colorbox{pink}{missing file : D-no-name}}}
}=
\InputIfFileExists{./figures/CPM-pure.tikz}{}{{\color{red}\colorbox{pink}{missing file : CPM-pure}}}
$\hfill$\ccpm{\ground}=~
\colorbox{pink}{missing file : cup}}}
\hfill~$

\medskip
\noindent
Where $[D]^{\operatorname{cj}}$ is $D$ where every angle $\alpha$ has been changed to $-\alpha$.

With respect to what happens to edge labels, notice that every edge in
$D$ can be mapped to 2 edges in $\operatorname{CPM}(D)$. We propose
that label $e$ induces label $e$ in the first copy, and $\overline e$
in the second, e.g, for the identity diagram:
$~~
\colorbox{pink}{missing file : id}}}
\phantom{.\!}_{\color{gray}e_0}~~\longmapsto~~

\colorbox{pink}{missing file : id}}}
\phantom{.\!}_{\color{gray}{e_0}}~
\colorbox{pink}{missing file : id}}}
\phantom{.\!}_{\color{gray}{\overline{e_0}}}$

In the general ZX-Calculus, it has been shown that the axiomatization itself could be extended to a complete one by adding only 4 axioms \cite{carette2019completeness}.

\vspace*{-1em}
\noindent\begin{minipage}{0.65\columnwidth}
\begin{example}
A $\cat{ZX}^{\sground}$-diagram and its associated CPM construction is
shown on the right (without names on the wires for simplicity).
\end{example}
\end{minipage}
\hfill\scalebox{0.8}{
\InputIfFileExists{./figures/example-CPM.tikz}{}{{\color{red}\colorbox{pink}{missing file : example-CPM}}}
}
\vspace*{-1em}

\subsection{Token Machine for Mixed Processes}

We now aim to adapt the token machine to $\cat{ZX}^{\sground}$, the
formalism for completely positive maps. To do so we give an additional state to
each token to mimic the evolution of two token on $\operatorname{CPM}(D)$.

\begin{definition}
Let $D$ be a ZX-diagram. A $\sground$-token is a quadruplet $(p,d,x,y)\in \mathcal E(D)\times\{\downarrow,\uparrow\}\times \{0,1\}\times \{0,1\}$.
We denote the set of $\sground$-tokens on $D$ by $\cat{tk^{\sground}}(D)$.
A $\sground$-token-state is then a multivariate polynomial over $\mathbb C$, evaluated in $\cat{tk^{\sground}}(D)$.
We denote the set of $\sground$-token-states on $D$ by $\cat{tkS^{\sground}}(D)$
\end{definition}

In other words, the difference with the previous machine is that tokens here have an additional state (e.g. $y$ in $(a\downarrow x,y)$). The rewrite rules are given in appendix in Table \ref{tab:cpm-rules}.
\begin{table}[!htb]
\begin{align*}
    
\InputIfFileExists{./figures/id-nw.tikz}{}{{\color{red}\colorbox{pink}{missing file : id-nw}}}
&\quad&(e_0\downarrow x, y)&{}(e_0\uparrow x', y')\rewrites_c \delta_{x,x'}\delta_{y,y'}\tag{Collision}\\[0.5em]

\InputIfFileExists{./figures/cup-nw.tikz}{}{{\color{red}\colorbox{pink}{missing file : cup-nw}}}
&&(e_b\downarrow x, y) &{}\rewrites_d (e_{\neg b}\uparrow x, y)\tag{$
\InputIfFileExists{./figures/cup.tikz}{}{{\color{red}\colorbox{pink}{missing file : cup}}}
$-diffusion}\\

\InputIfFileExists{./figures/cap-nw.tikz}{}{{\color{red}\colorbox{pink}{missing file : cap-nw}}}
&&(e_b\uparrow x,y) &{}\rewrites_d (e_{\neg b}\downarrow x,y)\tag{
\InputIfFileExists{./figures/cap.tikz}{}{{\color{red}\colorbox{pink}{missing file : cap}}}
-diffusion}\\
&&(e_k\downarrow x,y)&{}\rewrites_d e^{i\alpha(x-y)}
        \prod_{j\not=k} (e_j\uparrow x, y)\prod_j (e_j'\downarrow x, y)
\\
&&(e_k'\uparrow x, y)&{}\rewrites_d e^{i\alpha(x-y)}
        \prod_{j} (e_j\uparrow x, y)\prod_{j\neq k} (e_j'\downarrow x, y)
\hspace{10ex}~
\\
&&(e_0\downarrow x,y)&{}\rewrites_d \frac 1{2}\sum_{z,z'\in\{0,1\}}(-1)^{xz+yz'}(e_1\downarrow z,z')
\\
&&(e_1\uparrow x,y)&{}\rewrites_d \frac 1{2}\sum_{z,z'\in\{0,1\}}(-1)^{xz+yz'}(e_0\uparrow z,z')
\\
    \overset{\text{\scriptsize\color{gray}$e_0$}}{\ground}&&(e_0\downarrow x,y)&{}\rewrites_d \delta_{x,y}\tag{Trace-Out}\\
\end{align*}%
\begin{textblock}{1}(0.7,-3)
\InputIfFileExists{./figures/gn-alpha-nw.tikz}{}{{\color{red}\colorbox{pink}{missing file : gn-alpha-nw}}}
\end{textblock}%
\begin{textblock}{1}(0.8,-2)
\InputIfFileExists{./figures/H-nw.tikz}{}{{\color{red}\colorbox{pink}{missing file : H-nw}}}
\end{textblock}%
\begin{textblock}{2}(8.7,-1.8)(
\InputIfFileExists{./figures/H.tikz}{}{{\color{red}\colorbox{pink}{missing file : H}}}
-Diffusion)\end{textblock}%
\begin{textblock}{2}(8.7,-2.8)(
\InputIfFileExists{./figures/gn.tikz}{}{{\color{red}\colorbox{pink}{missing file : gn}}}
-Diffusion)\end{textblock}
    \caption{The rewrite rules for $\cpm$, where $\delta$ is the Kronecker delta.}
    \label{tab:cpm-rules}
\end{table}

It is possible to link this formalism back to the pure
token-states, using the existing CPM construction
for ZX-diagrams. We extend this map by
$\operatorname{CPM}:\cat{tkS^{\sground}}(D)\to\cat{tkS}(\operatorname{CPM}(D))$,
defined as: $\displaystyle \sum_{q=1}^{2^{m+n}} \lambda_q\prod_j(p_j,d_j,x_{j,q},y_{j,q})
\mapsto \sum_{q=1} \lambda_q\prod_j({p_j},d_j,x_{j,q})(\overline{p_j},d_j,y_{j,q})$

\smallskip
\noindent
Since $\operatorname{CPM}(D)$ can be seen as two copies of $D$ where $\ground$ is replaced by 
\colorbox{pink}{missing file : cup}}}
, each token in $D$ corresponds to two tokens in $\operatorname{CPM}(D)$, at the same spot but in the two copies of $D$. The two states $x$ and $y$ represent the states of the two corresponding tokens.

We can then show that this rewriting system is consistent:
\begin{theorem}
\label{thm:simulation-cpm}
Let $D$ be a $\cat{ZX}^{\sground}$-diagram, and $t_1,t_2\in
\cat{tkS^{\sground}}(D)$. Then whenever
$t_1\cpm t_2$ we have $\operatorname{CPM}(t_1)\rewrites^{\{1,2\}}\operatorname{CPM}(t_2)$.\qed
\end{theorem}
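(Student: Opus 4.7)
The plan is to proceed by case analysis on the single $\cpm$ step that witnesses $t_1 \cpm t_2$. Since $\operatorname{CPM}(\cdot)$ is defined to preserve sums and commute with multiplication on tokens, it suffices to treat the case where the redex is a single monomial (or pair of monomials, for collision). Concretely, the map sends a mixed token $(p,d,x,y)$ to the two pure tokens $(p,d,x)(\overline p,d,y)$ living in the two ``layers'' of $\operatorname{CPM}(D)$. The intuition to verify is that one mixed step acts independently on the two layers, and hence should unfold into one or two pure steps, matching the notation $\rewrites^{\{1,2\}}$.

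First I would dispatch the structural cases. For the $\cup$- and $\cap$-diffusion rules, the mixed redex $(e_b\downarrow x,y)$ is translated to $(e_b\downarrow x)(\overline{e_b}\downarrow y)$; each copy triggers exactly one pure $\cup$-diffusion on the corresponding cup in $\operatorname{CPM}(D)$ (recall that a mixed cup becomes two stacked cups), producing $(e_{\neg b}\uparrow x)(\overline{e_{\neg b}}\uparrow y)$, which is precisely $\operatorname{CPM}((e_{\neg b}\uparrow x,y))$. Two pure steps, one per layer. The identity/collision rule is similar: the translation of $(e_0\downarrow x,y)(e_0\uparrow x',y')$ contains a positive/negative collision on $e_0$ and a separate one on $\overline{e_0}$, firing in sequence to the scalar $\delta_{x,x'}\delta_{y,y'}$. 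For the trace-out rule, the ground node in $D$ becomes a cup in $\operatorname{CPM}(D)$ connecting $e_0$ and $\overline{e_0}$; starting from $(e_0\downarrow x)(\overline{e_0}\downarrow y)$, one $\cup$-diffusion sends $(\overline{e_0}\downarrow y)$ to $(e_0\uparrow y)$ (or symmetrically), and the resulting collision with $(e_0\downarrow x)$ produces $\delta_{x,y}$, again in at most two steps.

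The spider and Hadamard cases need a little bookkeeping with phases. For the green spider of angle $\alpha$, the CPM construction places a copy of angle $\alpha$ in the first layer and a copy of angle $-\alpha$ (the conjugation $[D]^{\operatorname{cj}}$) in the second. Applying the pure spider-diffusion rule in each layer produces phases $e^{i\alpha x}$ and $e^{-i\alpha y}$, whose product $e^{i\alpha(x-y)}$ is exactly the phase prescribed by the mixed rule. The set of tokens produced, $\prod_{j\neq k}(e_j\uparrow x)(\overline{e_j}\uparrow y)\prod_j(e_j'\downarrow x)(\overline{e_j'}\downarrow y)$, is precisely the CPM image of the mixed right-hand side. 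For the Hadamard rule, each pure diffusion yields $\tfrac1{\sqrt2}\bigl((-1)^x(e_1\downarrow x)+(e_1\downarrow\neg x)\bigr)$; multiplying the two layers and re-indexing the summation variables as $z,z'\in\{0,1\}$ using $(-1)^x = (-1)^{xz}$ at $z=0,1$ up to re-parameterization, the cross-term simplifies to $\tfrac14\sum_{z,z'}(-1)^{xz+yz'}$ on the pair $(e_1,\overline{e_1})$, matching $\operatorname{CPM}$ applied to the mixed right-hand side (with the $\tfrac1{2}$ vs.\ $\tfrac14$ discrepancy absorbed by the duplication of tokens).

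The main obstacle is thus not conceptual but book-keeping: one must argue that the $\{1,2\}$-step bound covers every rule (trace-out and identity-collision both need two pure steps because both layers have to be processed), and one must justify the Hadamard and spider algebra above carefully, checking the sign bookkeeping against the precise normalization chosen in Table~\ref{tab:async-rules} and Table~\ref{tab:cpm-rules}. A secondary subtlety is that $\operatorname{CPM}(\cdot)$ on token-states is defined on terms, so one must check it is well-defined on the equivalence classes modulo Convention~\ref{conv:poly}, and that the two pure rewrites in each case can be fired independently (they act on disjoint edges, one on the $e$-layer and one on the $\overline e$-layer, so no well-formedness issue arises). Once these checks are made, the theorem follows by enumeration.
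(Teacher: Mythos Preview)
Your approach is exactly the paper's: a case analysis on the rule witnessing the $\cpm$ step, checking that each mixed rule unfolds into one or two pure $\rewrites$ steps on the two layers of $\operatorname{CPM}(D)$. The paper's own proof is in fact much terser than yours (it simply declares the diffusion cases ``trivial'' and singles out the ground case), so your structural analysis of cup, cap, spider, collision, and trace-out is correct and more explicit than what appears there.

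The one place where your write-up is genuinely off is the Hadamard bookkeeping. The product of the two pure diffusions carries the scalar $\tfrac{1}{\sqrt2}\cdot\tfrac{1}{\sqrt2}=\tfrac12$, not $\tfrac14$, and there is no ``discrepancy absorbed by the duplication of tokens'': the pure rule may be rewritten as $(e_0\downarrow x)\rewrites_d\frac{1}{\sqrt2}\sum_{z}(-1)^{xz}(e_1\downarrow z)$ (check the two values of $x$), and multiplying the two layers gives precisely $\frac12\sum_{z,z'}(-1)^{xz+yz'}(e_1\downarrow z)(\overline{e_1}\downarrow z')$, which is $\operatorname{CPM}$ of the mixed right-hand side on the nose. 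Your remark ``$(-1)^x=(-1)^{xz}$ at $z=0,1$'' is not the right identity (it fails at $z=0$); the correct reparameterisation is the one just stated. Once this is fixed, the case goes through with no further subtlety, and your outline is complete.
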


The notions of polarity, well-formedness and cycle-balancedness can be adapted, and  we get strong normalization (Theorem~\ref{thm:termination}), confluence (Corollary~\ref{cor:confluence}), and uniqueness of normal forms (Corollary~\ref{cor:uniquenessNF}) for well-formed and cycle-balanced token states.

\section{Conclusion and Future Work}
\label{sec:future-work}

In this paper, we presented a novel \emph{particle-style} semantics
for ZX-Calculus. Based on a token-machine automaton, it emphasizes the
\emph{asynchronicity} and \emph{non-orientation} of the computational
content of a ZX-diagram. Compared to existing token-based semantics of
quantum computation such as~\cite{goisync}, our proposal furthermore
support decentralized tokens where the position of a token can be in
superposition.

As quantum circuits can be mapped to ZX-diagrams, our token machines
induce a notion of asynchronicity for quantum circuits. This contrasts
with the notion of token machine defined in \cite{goisync} where some
form of synchronicity is enforced.

Our token machines give us a new way to look at how a ZX-diagram
computes with a more local, operational approach. This could lead to
extensions of the ZX-Calculus with more expressive logical and
computational constructs, such as
recursion.

As a final remark, we notice that this formalism naturally extends to other graphical languages for qubit quantum computation, and even for tensor networks. It suffices to adapt the diffusion rewriting steps to the generators at hand, which is always possible in the setting of finite dimensional Hilbert spaces, and if needs be to adapt the states in tokens to the dimension of the wire they go through (e.g. if a wire in a tensor network is of dimension 4, the state spans $\{0,1,2,3\}$).

\bibliography{bibli.bib}
\section{Rewrite Rules for Mixed Processes}
~
\begin{table}[!htb]
\begin{align*}
    
\InputIfFileExists{./figures/id-nw.tikz}{}{{\color{red}\colorbox{pink}{missing file : id-nw}}}
&\quad&(e_0\downarrow x, y)&{}(e_0\uparrow x', y')\rewrites_c \delta_{x,x'}\delta_{y,y'}\tag{Collision}\\[0.5em]

\InputIfFileExists{./figures/cup-nw.tikz}{}{{\color{red}\colorbox{pink}{missing file : cup-nw}}}
&&(e_b\downarrow x, y) &{}\rewrites_d (e_{\neg b}\uparrow x, y)\tag{$
\InputIfFileExists{./figures/cup.tikz}{}{{\color{red}\colorbox{pink}{missing file : cup}}}
$-diffusion}\\

\InputIfFileExists{./figures/cap-nw.tikz}{}{{\color{red}\colorbox{pink}{missing file : cap-nw}}}
&&(e_b\uparrow x,y) &{}\rewrites_d (e_{\neg b}\downarrow x,y)\tag{
\InputIfFileExists{./figures/cap.tikz}{}{{\color{red}\colorbox{pink}{missing file : cap}}}
-diffusion}\\
&&(e_k\downarrow x,y)&{}\rewrites_d e^{i\alpha(x-y)}
        \prod_{j\not=k} (e_j\uparrow x, y)\prod_j (e_j'\downarrow x, y)
\\
&&(e_k'\uparrow x, y)&{}\rewrites_d e^{i\alpha(x-y)}
        \prod_{j} (e_j\uparrow x, y)\prod_{j\neq k} (e_j'\downarrow x, y)
\hspace{10ex}~
\\
&&(e_0\downarrow x,y)&{}\rewrites_d \frac 1{2}\sum_{z,z'\in\{0,1\}}(-1)^{xz+yz'}(e_1\downarrow z,z')
\\
&&(e_1\uparrow x,y)&{}\rewrites_d \frac 1{2}\sum_{z,z'\in\{0,1\}}(-1)^{xz+yz'}(e_0\uparrow z,z')
\\
    \overset{\text{\scriptsize\color{gray}$e_0$}}{\ground}&&(e_0\downarrow x,y)&{}\rewrites_d \delta_{x,y}\tag{Trace-Out}\\
\end{align*}%
\begin{textblock}{1}(0.7,-3)
\InputIfFileExists{./figures/gn-alpha-nw.tikz}{}{{\color{red}\colorbox{pink}{missing file : gn-alpha-nw}}}
\end{textblock}%
\begin{textblock}{1}(0.8,-2)
\InputIfFileExists{./figures/H-nw.tikz}{}{{\color{red}\colorbox{pink}{missing file : H-nw}}}
\end{textblock}%
\begin{textblock}{2}(8.7,-1.8)(
\InputIfFileExists{./figures/H.tikz}{}{{\color{red}\colorbox{pink}{missing file : H}}}
-Diffusion)\end{textblock}%
\begin{textblock}{2}(8.7,-2.8)(
\InputIfFileExists{./figures/gn.tikz}{}{{\color{red}\colorbox{pink}{missing file : gn}}}
-Diffusion)\end{textblock}
    \caption{The rewrite rules for $\cpm$, where $\delta$ is the Kronecker delta.}
    \label{tab:cpm-rules}
\end{table}

\section{Proofs of Section~\ref{sec:token-machine}}

\begin{proof}[Proof of Proposition \ref{prop:well-formedness-invariance}]

Let $D$ be a ZX-diagram, and $s$ be a well-formed token state on $D$. Consider a rewrite $s\rewrites s'$. We want to show that for all paths $p$ in $D$, if $P(p,t)\in\{-1,0,1\}$ for all terms $t$ of $s$, then $P(p,t')\in\{-1,0,1\}$ for all terms $t'$ in $s'$.

Let $t$ be a term of $s$, and $e_0$ be the edge where a rewriting occurs. If the rewriting does not affect $t$, then the well-formedness of $t$ obviously holds. If it does, and $t\rewrites_{c,d} \sum_qt_q$, we have to check two cases:
\begin{itemize}
    \item Collision: let $p\in\operatorname{Paths}(D)$. If no tokens remain in the term $t_q$, then $P(p,t_q)=0$. Otherwise:
    \begin{itemize}
        \item if $e_0\notin p$, then $P(p,t_q)=P(p,t)$
        \item if $e_0\in p$, then $P(p,t_q)=P(p,t)+1-1$ because the two tokens have alternating polarity
    \end{itemize}
    \item Diffusion: let $p\in\operatorname{Paths}(D)$, and $(e_0,d, x)\rewrites_d \sum_q\lambda_q\prod_{i\in S}(e_i,d _i,x_{i,q})$ (this captures all possible diffusion rules).
    \begin{itemize}
        \item if $e_0\notin p$ and $\forall i, e_i\notin p$, then $P(p,t_q)=P(p,t)$
        \item if $e_0\in p$ and $\exists k\in S, e_k\in p$, then $\forall i\neq k, e_i\notin p$, because the generator can only be passed through once by the path $p$. We have $P(p,(e_0,d, x)) = P(p,(e_k,d_k,x_{k,q}))$ by the definition of orientation in a path, which means that $\forall q, P(p,t_q)=P(p,t)$
        \item if $e_0\in p$ and $\forall i, e_i\notin p$, then, either $i)$ $p$ ends with $e_0$ and $e_0$ is $d$-oriented in $p$, or $ii)$ $p$ starts with $e_0$ and $e_0$ is $\neg d$-oriented in $p$. In both cases, since that $p\setminus\{e_0\}$ is still a path, we have $P(p\setminus\{e_0\},t)\in\{-1,0,1\}$ and since $P(p,t_q)=P(p\setminus\{e_0\},t)$, we deduce that $t_q$ is still well-formed
        \item if $e_0\notin p$ but $\exists k\in S, e_k\in p$, either $e_k$ is an extremity of $p$, or $\exists k',e_{k'}\in p$. In the latter case, the tokens in $e_k$ and $e_{k'}$ will have alternating polarity in $p$, so $\forall q,P(p,t_q)=P(p,t)+1-1$. In the first case, we can show in a way similar to the previous point, that $P(p,t_q)=P(p\setminus\{e_k\},t)\in\{-1,0,1\}$
    \end{itemize}
\end{itemize}
\end{proof}

\begin{proof}[Proof of Proposition \ref{prop:well-formed-terms-charcterisation}]
Let $t$ be a term in $s$, and $p=(e_0,...,e_n)$ such that $P(p,t)\geq2$ (if $P(p,t)\leq -2$, can can simply take the reversed path). We can show that we can rewrite $t$ into a token state with term $t' = (e_i,d,\_)(e_i,d,\_)t''$. We do so by induction on $n=|p|-1$.

If $n=0$, we have a path constituted of one edge, such that $|P(p,t)|\geq2$. Even after doing all possible collisions, we are left with $|P(p,t)|$ tokens on $e_0$, and oriented accordingly.

For $n+1$, we look at $e_0$, build $p':=(e_1,...,e_n)$, and distinguish four cases. If there is no token on $e_0$, we have $P(p',t)=P(p,t)$, so the result is true by induction hypothesis on $p'$. If we have a product of at least two tokens going in the same direction, the result is directly true. If we have exactly one token going in each direction, we apply the collision rules, and still have $P(p',t)=P(p,t)$, so the result is true by induction hypothesis on $p'$. Finally, if we have exactly one token $(e_0,d,\_)$ on $e_0$, either $e_0$ is not $d$-oriented, in which case $P(p',t)=P(p,t)+1$, or $e_0$ is $d$-oriented, in which case the adequate diffusion rule on $(e_0,d,\_)$ will rewrite $t\rewrites \sum_qt_q$ with $P(p',t_q)=P(p,t)$.
\end{proof}

\begin{proof}[Proof of Proposition~\ref{prop:invariant-cycles}]
The proof can be adapted from the proof of Proposition~\ref{prop:well-formedness-invariance}, by forgetting the cases related to the extremity of the paths, as well as the null terms (which can arise from collisions). It can then be observed that the quantity $P$ in this simplified setting is more than bounded to $\{-1,0,1\}$, but preserved.
\end{proof}

\begin{proof}[Proof of Lemma~\ref{lemma:rewinding}]
We reason by induction on the length $k$ of the rewrite that leads from $t$ to $\sum_i\lambda_it_i$.

If $k=0$, we have $(e_n,d,x)\in t$, so the path $p:=(e_n)$ is sufficient.

For $k+1$, suppose $t\rewrites \sum_i\lambda_it_i$, and $t_1\rewrites^k \sum_j\lambda'_jt'_j$ (hence $t\rewrites^{k+1}\sum_{i\neq1}\lambda_it_i+\sum_j\lambda'_jt'_j$), with $(e_n,d,x)\in t'_1$. By induction hypothesis, there is $p=(e_0,...,e_n)$ such that $P(p,t_1)=1$. We now need to look at the first rewrite from $t$.
\begin{itemize}
    \item if the rewrite concerns a generator not in $p$, then $P(p,t)=P(p,t_1)=1$
    \item if the rewrite is a collision, then $P(p,t)=P(p,t_1)=1$
    \item if the rewrite is $(e,d_e,x_e)\rewrites \sum_q\lambda_q\prod_i(e'_i,d_i,x_{i,q})$
    \begin{itemize}
        \item if $e\in p$ and $e'_1\in p$, then $P(p,t)=P(p,t_1)=1$
        \item if $e'_1\in p$ and $e'_2\in p$, then $P(p,t)=P(p,t_1)-1+1=1$
        \item the case $e\in p$ and $\forall i,~e'_i\notin p$ is impossible:
        \begin{itemize}
            \item if $e$ is not $d_e$-oriented in $p$, it means $e=e_0$, hence $P((e_1,...,e_n),t)=P(p,t)+1=2$ which is forbidden by well-formedness
            \item if $e$ is $d_e$-oriented in $p$, it means $e=e_n$, which would imply that $P(p,t_1)=0$
        \end{itemize}
        \item if $e\notin p$ and $e'_1\in p$ and $\forall i\neq1, e'_i\notin p$, then $P(e::p,t)=P(p,t_1)=1$, since well-formedness prevents the otherwise possible situation $P(e::p,t)=P(p,t_1)+1=2$. However, $e::p$ may not be a path anymore. If $c=(e,e_0,...,e_\ell)$ forms a cycle, then, since $P(c,t)=0$, we can simply keep the path $p':=(e_{\ell+1},...,e_n)$ with $P(p',t)=1$
    \end{itemize}
    
\end{itemize}
\end{proof}

\begin{proof}[Proof of Theorem~\ref{thm:termination}]
$[\Rightarrow]$: Suppose $\exists c\in\operatorname{Cycles}(D)$ and $t$ a term of $s$ such that $P(c,t)\neq 0$. By well-formedness, $P(c,t)\in\{-1,1\}$. Any terminal term $t'$ has $P(c,t')=0$, so by preservation of the quantity $P(c,\_)$, $t$ (and henceforth $s$) cannot terminate.

$[\Leftarrow]$: We are going to show for the reciprocal that, if $t$ is well-formed, and if the constraint $P(c,t)=0$ is verified for every cycle $c$, then any generator in the diagram can be visited at most once. More precisely, we show that if a generator is visited in a term $t$, then it cannot be visited anymore in all the terms derived from $t$. However, the same generator can be visited once for each superposed term (e.g.~once in $t_1$ and once in $t_2$ for the token state $t_1+t_2$).\\
Consider an edge $e$ with token exiting generator $g$ in the term $t$.
Suppose, by reductio ad absurdum, that a token will visit $g$ again in
$t'$ (obtained from $t$), by edge $e_n$ with orientation $d$. By Lemma
\ref{lemma:rewinding}, there exists a path $p=(e_0,...,e_n)$ such that
$P(p,t)=1$ and $e_n$ is $d$-oriented. Since $e\notin p$ (we would not
have a path then), then $p':=(e_0,...,e_n,e)$ is a path (or possibly a
cycle) such that $P(p',t)=2$. This is forbidden by well-formedness.
Hence, every generator can be visited at most once. As a consequence,
the lexicographic order $(\#g,\#tk)$ (where $\#g$ is the number of
non-visited generators in the diagram, and $\#tk$ the number of tokens
in the diagram) strictly reduces with each rewrite. This finishes the
proof of termination.
\end{proof}

\begin{proof}[Proof of Proposition~\ref{prop:local_confluence}]
We are going to reason on every possible pair of rewrite rules that can be applied from a single token state $s$. Notice first, that if the two rules are applied on two different terms of $s$, such that the rewriting of a term creates a copy of the other, they obviously commute, so $\begin{array}{c@{~}c@{~}c}
    s & \rewrites & s_2 \\
    \downrewrites &  & \downrewrites\\
    s_1 & \rewrites & s'
\end{array}$.\\
In the case where $s=\alpha t+\beta t_1 +s_0$ such that $t_1\rewrites s'$ and $t\rewrites \sum_i\lambda_it_i$, we have:
$$\begin{array}{c@{~}c@{~}c@{~}c@{~}c}
     & \nerewrites & \alpha t+\beta s'+s_0 & \rewrites & \sum_i\alpha\lambda_it_i+\beta s'+s_0\\
    s &  &  &  & \downrewrites \\
     & \serewrites & (\alpha\lambda_1+\beta)t_1+ \sum_{i\neq1}\alpha\lambda_it_i+s_0 & \rewrites & (\alpha\lambda_1+\beta)s'+ \sum_{i\neq1}\alpha\lambda_it_i+s_0
\end{array}$$

Then, we can, in the following, focus on pairs of rules applied on the same term.\\
The term we focus on is obviously collision-free, by hypothesis and by
preservation of collision-freeness by $\rewrites$.

Suppose the two rewrites are applied on tokens at positions $e$ and $e'$. We may reason using the distance between the two edges.
\begin{itemize}
    \item the case $d(e,e')=0$ would imply a collision, which is impossible by collision-freeness
    \item if $d(e,e')\geq3$, the two rules still don't interfere, they commute (up to collisions which do not change the result)
    \item if $d(e,e')=2$, there will be common collisions (i.e.~collisions between tokens created by each of the diffusions), however, the order of application of the rules will not change the bits in the tokens we will apply a collision on, so the result holds
    \item if $d(e,e')=1$, then the two tokens have to point to the
      same generator. If they didn't, $(e,e')$ would form a path such
      that $|P((e,e'),t)|=2$ which is forbidden by well-formedness. We
      can then show the property for all generators:

    Case $
\InputIfFileExists{./figures/cup-nw.tikz}{}{{\color{red}\colorbox{pink}{missing file : cup-nw}}}
$.

    $$\begin{array}{c@{~}c@{~}c}
        (e_0\downarrow x)(e_1\downarrow x') & \rewrites_d & (e_1\uparrow x)(e_1\downarrow x') \\
        \downrewrites_d &  & \downrewrites_c\\
        (e_0\downarrow x)(e_0\uparrow x') & \rewrites_c & \braket{x}{x'}
    \end{array}$$

    Case $
\InputIfFileExists{./figures/cap-nw.tikz}{}{{\color{red}\colorbox{pink}{missing file : cap-nw}}}
$: similar.

    Case $
\InputIfFileExists{./figures/gn-alpha-nw.tikz}{}{{\color{red}\colorbox{pink}{missing file : gn-alpha-nw}}}
$.

    $$\scalebox{.9}{$\begin{array}{c@{~}c@{~}c}
        e^{i\alpha x}\prod_{i\neq1}(e_i\uparrow x)\prod_{i}(e'_i\downarrow x)(e'_1\uparrow x') &\serewrites_c&\\
        \uprewrites_d &  & \braket{x}{x'}e^{i\alpha x}\prod_{i\neq1}(e_i\uparrow x)\prod_{i\neq 1}(e'_i\downarrow x)\\
        (e_1\downarrow x)(e'_1\uparrow x') &  & |\,| \\
        \downrewrites_d &  & \braket{x}{x'}e^{i\alpha x'}\prod_{i\neq1}(e_i\uparrow x')\prod_{i\neq 1}(e'_i\downarrow x')\\
        e^{i\alpha x'}\prod_{i}(e_i\uparrow x')\prod_{i\neq1}(e'_i\downarrow x)(e_1\downarrow x) & \nerewrites_c &
    \end{array}$}$$
    
    Case $
\InputIfFileExists{./figures/H-nw.tikz}{}{{\color{red}\colorbox{pink}{missing file : H-nw}}}
$.

    $$\scalebox{.9}{$\begin{array}{c@{~}c@{~}c}
        \frac1{\sqrt2}\left((-1)^x(e_1\downarrow x)(e_1\uparrow x') + (e_1\downarrow \neg x)(e_1\uparrow x')\right) &\serewrites_c^2&\\
        \uprewrites_d &  & \frac1{\sqrt2}\left((-1)^x\braket{x}{x'} + \braket{\neg x}{x'}\right)\\
        (e_0\downarrow x)(e_1\uparrow x') &  & |\,| \\
        \downrewrites_d &  & \frac1{\sqrt2}\left((-1)^{x'}\braket{x}{x'} + \braket{x}{\neg x'}\right)\\
        \frac1{\sqrt2}\left((-1)^{x'}(e_0\downarrow x)(e_0\uparrow x') + (e_0\downarrow x)(e_0\uparrow \neg x')\right) & \nerewrites_c^2 &
    \end{array}$}$$
\end{itemize}
\end{proof}

\begin{proof}[Proof of Proposition~\ref{prop:single-token-input}]
Let us first notice that, using the map/state duality, we have 
$(a_k\downarrow x)\rewrites^*\sum\limits_{q=1}^{2^{m+n-1}}\lambda_q \prod\limits_i(b_i\downarrow y_{i,q})\prod\limits_{i\neq k}(a_i\uparrow x_{i,q})$ in $D$ iff we have $(a_k\downarrow x)\rewrites^*\sum\limits_{q=1}^{2^{m+n-1}}\lambda_q \prod\limits_i(b_i\downarrow y_{i,q})\prod\limits_{i\neq k}(a_i'\downarrow x_{i,q})$ in $D'$ where $
\InputIfFileExists{./figures/Dp-from-D.tikz}{}{{\color{red}\colorbox{pink}{missing file : Dp-from-D}}}
$. Hence, we can, w.l.o.g.~consider in the following that $n=1$. We also notice that thanks to the confluence of the rewrite system, we can consider diagrams up to "topological deformations", and hence ignore cups and caps.

We then proceed by induction on the number $N$ of ``non-wire generators'' (i.e.~Z-spider, X-spiders and H-gates) of $D$, using the fact that the diagram is connected:

If $N=0$, then $D = 
\colorbox{pink}{missing file : id}}}
$, where the result is obvious.

If $N=1$, then $D\in\left\lbrace,
\colorbox{pink}{missing file : H}}}
,
\InputIfFileExists{./figures/gn-1-n.tikz}{}{{\color{red}\colorbox{pink}{missing file : gn-1-n}}}
,
\InputIfFileExists{./figures/rn-1-n.tikz}{}{{\color{red}\colorbox{pink}{missing file : rn-1-n}}}
\right\rbrace$. The result in this base case is then a straightforward verification (self-loops in green and red nodes simply give rise to collisions that are handled as expected).

For $N+1$, there exists $D'$ with $N$ non-wire generators and such that $$D\in\left\lbrace
\InputIfFileExists{./figures/IH-H-top.tikz}{}{{\color{red}\colorbox{pink}{missing file : IH-H-top}}}
,
\InputIfFileExists{./figures/IH-gn-top.tikz}{}{{\color{red}\colorbox{pink}{missing file : IH-gn-top}}}
,
\InputIfFileExists{./figures/IH-rn-top.tikz}{}{{\color{red}\colorbox{pink}{missing file : IH-rn-top}}}
\right\rbrace$$
(we should actually take into account the self loops, but they do not change the result).
Let us look at the first two cases, since the last one can be induced by composition.

If $D=
\InputIfFileExists{./figures/IH-H-top-nw.tikz}{}{{\color{red}\colorbox{pink}{missing file : IH-H-top-nw}}}
$, then $D'$ is necessarily connected, by connectivity of $D$. Then:
\begin{align*}
(a\downarrow x)
&\rewrites\frac{(-1)^x}{\sqrt2}(a'\downarrow x)+\frac1{\sqrt2}(a'\downarrow \neg x)\\
&\rewrites^* \frac{(-1)^x}{\sqrt2}\sum_{q=1}^{2^m}\lambda_{q}\prod_{i=1}^m(b_i\downarrow y_{i,q}) + \frac1{\sqrt2}\sum_{q=1}^{2^m}\lambda'_{q}\prod_{i=1}^m(b_i\downarrow y_{i,q})\\
&\quad = \sum_{q=1}^{2^m}\frac{\lambda'_q+(-1)^x\lambda_q}{\sqrt2}\prod_{i=1}^m(b_i\downarrow y_{i,q})
\end{align*}
where by induction hypothesis 
$$\interp{D'}\ket{x}=\sum_{q=1}^{2^{m}}\lambda_{q}\ket{y_{1,q},...,y_{m,q}}$$
and 
$$\interp{D'}\ket{\neg
  x}=\sum_{q=1}^{2^{m}}\lambda'_{q}\ket{y_{1,q},...,y_{m,q}}$$
so:
\begin{align*}
    \interp{D}\ket x &= \interp{D'\circ H}\ket x = \interp{D'}\circ\interp{H}\ket x
    = \interp{D'}\circ\left(\frac{(-1)^x}{\sqrt2}\ket x+\frac1{\sqrt2}\ket{\neg x}\right)\\
    &= \frac{(-1)^x}{\sqrt2}\interp{D'}\ket x+\frac1{\sqrt2}\interp{D'}\ket{\neg x}
    = \sum_{q=1}^{2^{m}} \frac{\lambda'_{q}+(-1)^x\lambda_{q}}{\sqrt2}\ket{y_{1,q},...,y_{m,q}}
\end{align*}
which is the expected result.

Now, if $D=
\InputIfFileExists{./figures/IH-gn-top.tikz}{}{{\color{red}\colorbox{pink}{missing file : IH-gn-top}}}
$, we can decompose $D'$ in its connected components:
$$D=
\InputIfFileExists{./figures/IH-gn-top.tikz}{}{{\color{red}\colorbox{pink}{missing file : IH-gn-top}}}
=
\InputIfFileExists{./figures/IH-gn-decomp-nw.tikz}{}{{\color{red}\colorbox{pink}{missing file : IH-gn-decomp-nw}}}
$$
with $D_i$ connected. Then:

\begin{align*}
    (a\downarrow x)
    &\rewrites e^{i\alpha x}\prod_\ell\prod_i(a_{\ell,i}\downarrow x)
\\
    &\rewrites^* e^{i\alpha x}\prod_\ell \left(\sum_{q=1}^{2^{m_{\ell}+n_\ell-1}}\lambda_{q,\ell}\prod_{i\neq1}(a_{\ell,i}\downarrow x)(a_{\ell,i}\uparrow x_{\ell,i,q})\prod_i(b_{\ell,i}\downarrow y_{\ell,i,q})\right)\\
    &\rewrites^* e^{i\alpha x}\prod_\ell
      \left(\sum_{q=1}^{2^{m_{\ell}+n_\ell-1}}\lambda_{q,\ell}\delta_{x,x_{\ell,i,q}}\prod_i(b_{\ell,i}\downarrow
      y_{\ell,i,q})\right) \\
&= e^{i\alpha x}\prod_\ell \left(\sum_{q=1}^{2^{m_{\ell}}}\lambda'_{q,\ell}\prod_i(b_{\ell,i}\downarrow y_{\ell,i,q})\right)\\
    &= e^{i\alpha x}\sum_{q_1=1}^{2^{m_1}}...\sum_{q_k=1}^{2^{m_k}}\lambda'_{q_1,1}...\lambda'_{q_k,k}\prod_i(b_{1,i}\downarrow y_{1,i,q_1})...\prod_i(b_{k,i}\downarrow y_{k,i,q_k})
  \\   
  &= \sum_{q=1}^{2^{m}}\lambda'_q\prod_i(b_i\downarrow y_{i,q})
\end{align*}
where the first is the diffusion through a Z-spider, and the second set of rewrites is the induction hypothesis applied to each connected component.

\begin{align*}
    \interp{D}\ket x =&
    \interp{(D_1\otimes...\otimes D_k)\circ Z_k^1(\alpha)}\ket x
    = (\interp{D_1}\otimes...\otimes \interp{D_k})\circ \interp{Z_k^1(\alpha)}\ket x\\
    &= e^{i\alpha x}(\interp{D_1}\otimes...\otimes \interp{D_k})\circ \ket{x,...,x}
    = e^{i\alpha x}\interp{D_1}\ket{x,...,x}\otimes...\otimes \interp{D_k}\ket{x,...,x}\\
    &=e^{i\alpha x}\left(\sum_{q_1}^{2^{m_1+n_1-1}}\lambda_{q_1,1}\ket{y_{1,1,q_1},...,y_{1,m_1,q_1}}\braket{x_{1,2,q_1},...,x_{1,n_1,q_1}}{x,...,x}\right)\otimes\\
    \tag*{$\displaystyle...\otimes \left(\sum_{q_k}^{2^{m_k+n_k-1}}\lambda_{q_k,k}\ket{y_{k,1,q_1},...,y_{k,m_1,q_k}}\braket{x_{k,2,q_k},...,x_{k,n_1,q_k}}{x,...,x}\right)$}\\
    &=e^{i\alpha x}\left(\sum_{q_1}^{2^{m_1+n_1-1}}\lambda_{q_1,1}\prod_i\delta_{x,x_{1,i,q_1}}\ket{y_{1,1,q_1},...,y_{1,m_1,q_1}}\right)\otimes\\
    \tag*{$\displaystyle ...\otimes \left(\sum_{q_k}^{2^{m_k+n_k-1}}\lambda_{q_k,k}\prod_i\delta_{x,x_{k,i,q_k}}\ket{y_{k,1,q_1},...,y_{k,m_1,q_k}}\right)$}\\
    &=e^{i\alpha x}\left(\sum_{q_1}^{2^{m_1}}\lambda'_{q_1,1}\ket{y_{1,1,q_1},...,y_{1,m_1,q_1}}\right)\otimes...\otimes \left(\sum_{q_k}^{2^{m_k}}\lambda'_{q_k,k}\ket{y_{k,1,q_1},...,y_{k,m_1,q_k}}\right)\\
    &= \sum_{q=1}^{2^m}\lambda'_q\ket{y_{1,q},...,y_{m,q}}
\end{align*}

where the third line is obtained by induction hypothesis, and all $\lambda'$ match the ones obtained from the rewrite of token states.

\end{proof}

\begin{proof}[Proof of Lemma~\ref{lem:connected-universality}]
There exist several methods to build a diagram $D_f$ such that $\interp{D_f}=f$, using the universality of quantum circuits together with the map/state duality \cite{zxorigin}, or using normal forms \cite{jeandel2019generic}. The novelty here is that the diagram should be connected. This problem can be fairly simply dealt with:

Suppose we have such a $D_f$ that has several connected components. We can turn it into an equivalent diagram that is connected. Let us consider two disconnected components of $D_f$. Each of these disconnected components either has at least one wire, or is one of $\{
\begin{tikzpicture}
	\begin{pgfonlayer}{nodelayer}
		\node [style=gn] (0) at (0, 0) {};
	\end{pgfonlayer}
\end{tikzpicture}
\colorbox{pink}{missing file : g}}}
\tikz \node[style=glabel] {$\alpha$};,
\begin{tikzpicture}
	\begin{pgfonlayer}{nodelayer}
		\node [style=rn] (0) at (0, 0) {};
	\end{pgfonlayer}
\end{tikzpicture}
\colorbox{pink}{missing file : r}}}
\tikz \node[style=rlabel] {$\alpha$};\}$. In either case, we can use the rules of ZX ((I$_g$) or (H)) to force the existence of a green node. These green nodes in each of the connected components can be ``joined'' together like this:
$$
\InputIfFileExists{./figures/connecting-nodes.tikz}{}{{\color{red}\colorbox{pink}{missing file : connecting-nodes}}}
$$
It is hence possible to connect every different connected components of a diagram in a way that preserves the semantics.
\end{proof}

\begin{proof}[Proof of Proposition~\ref{prop:n-tokens}]
Using Lemma \ref{lem:connected-universality}, there exists a connected ZX-diagram $D'$ with $\mathcal I(D')=[a']$ and such that $\interp{D'}\ket0 = \sum_{q=1}^{2^n}\lambda_q\ket{x_{1,q},...,x_{n,q}}$. Consider now a derivation from the token state $(a'\downarrow 0)$ in $D\circ D'$:
$$\begin{array}{c@{\quad}||@{\quad}p{0.8\columnwidth}}
    \raisebox{-1.5em}{
\InputIfFileExists{./figures/D-Dprime.tikz}{}{{\color{red}\colorbox{pink}{missing file : D-Dprime}}}
} &
        $(a'\downarrow 0)\rewrites^* \sum_{q=1}^{2^n}\lambda_q\prod_{i=1}^n(a_i\downarrow x_{i,q})$\newline
        and\newline
        $(a'\downarrow 0)\rewrites^* \sum_{q=1}^{2^m}\lambda_q'\prod_{i=1}^m(b_i\downarrow y_{i,q})$
\end{array}$$
The first run comes from Proposition \ref{prop:single-token-input} on $D'$ which is connected. The second run results from Proposition \ref{prop:single-token-input} on $D\circ D'$ which is also connected. The proposition also gives us that:
$$\interp{D}\circ\left(\sum_{q=1}^{2^n}\lambda_q\ket{x_{1,q},...,x_{n,q}}\right) = \interp{D}\circ\interp{D'}\circ\ket0 = \interp{D\circ D'}\circ\ket0 = \sum_{q=1}^{2^m}\lambda_q'\ket{y_{1,q},...,y_{m,q}}$$
Finally, by confluence in $D\circ D'$, we get $\sum_{q=1}^{2^n}\lambda_q\prod_{i=1}^n(a_i\downarrow x_{i,q}) \rewrites^*\sum_{q=1}^{2^m}\lambda_q'\prod_{i=1}^m(b_i\downarrow y_{i,q})$ in $D$.
\end{proof}

\begin{proof}[Proof of Theorem \ref{thm:arbitrary-wire-init}]
First, let us single out $e$ in the diagram $D=
\InputIfFileExists{./figures/D-decomp-wire.tikz}{}{{\color{red}\colorbox{pink}{missing file : D-decomp-wire}}}
$. We can build a second diagram by cutting $e$ in half and seeing each piece of wire as an input and an output:

\InputIfFileExists{./figures/D-decomp-wire-cut.tikz}{}{{\color{red}\colorbox{pink}{missing file : D-decomp-wire-cut}}}
. We can easily see that a rewriting of the token states $(e\downarrow 0)(e\uparrow 0)$ and $(e\downarrow 1)(e\uparrow 1)$ in $D$ correspond step by step to a rewriting of the token states $(e_0\downarrow 0)(e_1\uparrow 0)$ and $(e_0\downarrow 1)(e_1\uparrow 1)$ in $D'$. We can then focus on $D'$, whose interpretation is taken to be \[\interp{D'}=\sum_{q=1}^{2^{m+n+2}}\lambda'_q\ketbra{y'_{1,q},...,y'_{m+1,q}}{x'_{1,q},...,x'_{n+1,q}}\]
such that 
\[(id^{\otimes m}\otimes \bra0)\circ\interp{D'}\circ(id^{\otimes n}\otimes \ket0)+(id^{\otimes m}\otimes \bra1)\circ\interp{D'}\circ(id^{\otimes n}\otimes \ket1) = \interp{D}\]
from which we get:
\begin{align*}
    \interp{D}=&\sum_{q=1}^{2^{m+n+2}}\lambda'_q\delta_{0,y'_{m+1,q}}\delta_{0,x'_{n+1,q}}\ketbra{y'_{1,q},...,y'_{m,q}}{x'_{1,q},...,x'_{n,q}}\\
    &\quad+\sum_{q=1}^{2^{m+n+2}}\lambda'_q\delta_{1,y'_{m+1,q}}\delta_{1,x'_{n+1,q}}\ketbra{y'_{1,q},...,y'_{m,q}}{x'_{1,q},...,x'_{n,q}}
\end{align*}
We now have to consider two cases:
\begin{itemize}
    \item $D'$ is still connected: By Proposition \ref{prop:single-token-input}, for $x\in\{0,1\}$:
    \begin{align*}
        (e_0\downarrow x)(e_1\uparrow x) 
        &\rewrites^* \sum_{q=1}^{2^{m+n+2}}\lambda'_q \delta_{x,x'_{n+1,q}} \prod_i(a_i\uparrow x'_{i,q})\prod_i(b_i\downarrow y'_{i,q})(e_1\downarrow y'_{m+1,q})(e_1\uparrow x)\\
        &\rewrites \sum_{q=1}^{2^{m+n+2}}\lambda'_q\delta_{x,y'_{m+1,q}}\delta_{x,x'_{n+1,q}} \prod_i(a_i\uparrow x'_{i,q})\prod_i(b_i\downarrow y'_{i,q})
    \end{align*}
    We hence have 
    \begin{align*}
    &(e_0\downarrow 0)(e_1\uparrow 0)
    \rewrites^* t_0 =
    \sum_{q=1}^{2^{m+n+2}}\lambda'_q\delta_{0,y'_{m+1,q}}\delta_{0,x'_{n+1,q}} \prod_i(a_i\uparrow x'_{i,q})\prod_i(b_i\downarrow y'_{i,q})\\
    &(e_0\downarrow 1)(e_1\uparrow 1) \rewrites^* t_1 = \sum_{q=1}^{2^{m+n+2}}\lambda'_q\delta_{1,y'_{m+1,q}}\delta_{1,x'_{n+1,q}} \prod_i(a_i\uparrow x'_{i,q})\prod_i(b_i\downarrow y'_{i,q})
    \end{align*}
    so $t_0+t_1$ corresponds to the interpretation of $D$.
    \item $D'$ is now disconnected: Since $D$ was connected, the two connected components of $D$ were connected through $e$. Hence, $D'$ only has two connected components, one connected to $e_0$ and the other to $e_1$. By applying Proposition \ref{prop:single-token-input} to both connected components, we get the desired result.
\end{itemize}
\end{proof}

\section{Proof of Section~\ref{sec:mixed-processes}}

\begin{proof}[Proof of Theorem~\ref{thm:simulation-cpm}]
Diffusion rules are trivial. Beware in the case of the Ground, as the CPM will produce a cup, the $\cpm$ does not produce a new token when applying the Trace-Out rule, meanwhile the $\rewrites$ machine will do two rewriting rules to pass through the cup.
\end{proof}

\end{document}